\def\doi{9(1:09)2013}
\mathchardef\ls="213C    
\mathchardef\gr="213E    
\mathchardef\uparrow="0222  
\mathchardef\downarrow="0223  
\def\tr#1{\stackrel{#1}{\to}}
\def\pow#1{{\mathcal P_\omega}#1}
\def\T{\mathbf{T}}
\def\Set{\mathbf{Set}}
\newcommand{\bb}[1]{[\![ #1 ]\!]}
\newtheorem*{remark}{Remark}
\begin{document}
%
%

\title{Generalizing determinization from automata to coalgebras}

\author[A.~Silva]{Alexandra Silva\rsuper a}   
\address{{\lsuper a}Radboud University Nijmegen and Centrum Wiskunde \& Informatica} 
\email{ams@cwi.nl}  
\thanks{{\lsuper a}The work of Alexandra Silva is partially funded by the ERDF through the Programme COMPETE and by the Portuguese Foundation for Science and
Technology, project ref. \texttt{PTDC/EIA-CCO/122240/2010} and \texttt{SFRH/BPD/71956/2010}.
}

\author[F.~Bonchi]{Filippo Bonchi\rsuper b}   
\address{{\lsuper b}ENS Lyon, Universit\'{e} de Lyon, LIP (UMR 5668 CNRS ENS Lyon UCBL INRIA)} 
\email{filippo.bonchi@ens-lyon.fr}  
\thanks{{\lsuper b}The work of Filippo Bonchi is supported by the CNRS PEPS project CoGIP and the project ANR 12IS02001 PACE}   

\author[M.~Bonsangue]{Marcello Bonsangue\rsuper c}   
\address{{\lsuper c}LIACS - Leiden University} 
\email{marcello@liacs.nl}  

\author[J.~Rutten]{Jan Rutten\rsuper d}   
\address{{\lsuper d}Centrum Wiskunde \& Informatica and Radboud University Nijmegen} 
\email{janr@cwi.nl}  
\thanks{{\lsuper{c,d}}The research of Marcello Bonsangue and Jan Rutten has been carried out under the Dutch NWO project {\em CoRE: Coinductive Calculi for Regular Expressions.},
dossier number 612.063.920.}   

\keywords{Coalgebras, Powerset Construction, Linear Semantics} 
\ACMCCS{[{\bf Theory of computation}]: Models of
  computation---Abstract machines \& Formal languages and automata
  theory---Formalisms---Algebraic language theory \& Semantics and reasoning---Program semantics---Categorical semantics}
\subjclass{F.3.2}

\begin{abstract}
The powerset construction is a standard method for converting a
nondeterministic automaton into a deterministic one recognizing the 
same language. In this paper, we lift the powerset construction
from automata to the more general framework of coalgebras with structured
state spaces. Coalgebra is an abstract framework for the uniform study
of different kinds of dynamical systems. An endofunctor $F$ determines
both the type of systems ($F$-coalgebras) and a notion of behavioural
equivalence ($\sim_F$) amongst them. Many types of transition systems and
their equivalences can be captured by a functor $F$. For example, for
deterministic automata the derived equivalence is language equivalence,
while for non-deterministic automata it is ordinary bisimilarity.

We give several examples of applications of our generalized determinization 
construction, including partial Mealy machines, (structured) Moore automata, Rabin
probabilistic automata, and, somewhat surprisingly, even pushdown automata.
To further witness the generality of the approach we show how to characterize
coalgebraically several equivalences which have been object of interest in
the concurrency community, such as failure or ready semantics.
\end{abstract}
\maketitle

\section*{Introduction}

\emph{Coalgebra} is by now a well established general framework for
the study of the behaviour of large classes of dynamical systems,
including various kinds of automata (deterministic, probabilistic
etc.) and infinite data types (streams, trees and the like). For a
functor  $F\colon \Set \rightarrow \Set$, an $F$-coalgebra is a pair
$(X, f)$, consisting of a set $X$ of states and a function $f \colon
X \rightarrow F(X)$ defining the observations and transitions of the
states. Coalgebras generally come equipped with a standard notion of
equivalence called \emph{$F$-behavioural equivalence} that is fully
determined by their (functor) type $F$. Moreover, for most functors
$F$ there exists a \emph{final} coalgebra into which any
$F$-coalgebra is mapped by a unique homomorphism that identifies all
$F$-equivalent states.

Much of the coalgebraic approach can be nicely illustrated with
deterministic automata (DA), which are coalgebras of the functor
\mbox{$D(X) = 2 \times X^A$}. In a DA, two states are $D$-equivalent
precisely when they accept the same language. The set $2^{A^*}$ of
all formal languages constitutes a final $D$-coalgebra, into which
every DA is mapped by a homomorphism that sends any state to the
language it accepts.

It is well-known that \emph{non-deterministic} automata (NDA)
often provide more efficient (smaller) representations of
formal languages than DA's. Language acceptance of NDA's is typically
defined by turning them into DA's via the \emph{powerset construction}.
Coalgebraically this works as follows.
NDA's are coalgebras of the functor $N(X) = 2 \times \pow(X)^A$, where
$\pow$ is the finite powerset. An $N$-coalgebra $(X, f \colon X \to 2 \times \pow(X)^A)$
is \emph{determinized} by transforming it into a $D$-coalgebra
$(\pow(X), f^\sharp \colon \pow(X) \to 2 \times \pow(X)^A)$
(for details see Section~\ref{sec:motiv}).
Then, the language accepted by a state $s$ in the NDA $(X,f)$
is defined as the language accepted by the state $\{s\}$ in the DA
$(\pow(X), f^\sharp )$.

For a second variation on DA's, we look at
\emph{partial automata} (PA): coalgebras of the functor
$P(X) = 2 \times (1+X)^A$, where for certain input letters
transitions may be undefined. Again, one is often interested
in the DA-behaviour (i.e., language acceptance) of PA's. This
can be obtained  by turning them into DA's using \emph{totalization}.
Coalgebraically, this amounts to the transformation of a $P$-coalgebra
$(X, f \colon X \to 2 \times (1+X)^A )$ into a $D$-coalgebra
$(1+X, f^\sharp \colon 1+X \to 2 \times (1+X)^A )$.

Although the two examples above may seem very different, they
are both instances of one and the same phenomenon, which it is
the goal of the present paper to describe at a general level.
Both with NDA's and PA's, two things happen at the
same time: (i) more (or, more generally, different types of)
transitions are allowed, as a consequence of changing
the functor type by replacing $X$ by $\pow(X)$ and $(1+ X)$, respectively;
and (ii) the behaviour of NDA's and PA's is still given in
terms of the behaviour of the original DA's (language acceptance).

For a large family of $F$-coalgebras, both (i) and (ii) can be
captured simultaneously with the help of the categorical
notion of \emph{monad}, which generalizes the notion of algebraic theory.
The structuring of the state space $X$ can be expressed as a change of
functor type from $F(X)$ to $F(T(X))$. In our examples above, both the functors
$T_1(X) = \pow(X)$ and $T_2(X) = 1+X$ are monads,
and NDA's and PA's are obtained from DA's by changing the
original functor type $D(X)$ into $N(X) = D(T_1(X))$ and $P(X) = D(T_2(X))$.
Regarding (ii), one assigns $F$-semantics to an $FT$-coalgebra
$(X,f)$  by transforming it into an $F$-coalgebra $(T(X),f^\sharp)$,
again using the monad $T$. In our examples above,
the determinization of NDA's and the totalization of PA's
consists of the transformation of  $N$- and $P$-coalgebras
$(X,f)$ into $D$-coalgebras $(T_1(X), f^\sharp)$ and $(T_2(X), f^\sharp)$,
respectively.

We shall investigate general conditions on the functor types under
which the above constructions can be applied: for one thing, one has
to ensure that the  $FT$-coalgebra map $f\colon X \to F(T(X))$ induces a suitable
$F$-coalgebra map $f^\sharp \colon T(X) \to F(T(X)) $. 
Our results will lead to a uniform
treatment of all kinds of existing and new variations of automata,
that is, $FT$-coalgebras, by an algebraic structuring of their state
space through a monad $T$. Furthermore, we shall prove a number of
general properties that hold in all situations similar to the ones
above. For instance, there is the notion of $N$-behavioural
equivalence with which NDA's, being $N$-coalgebras, come equipped.
It coincides with the well-known notion of Park-Milner bisimilarity
from process algebra. A general observation is that if two states in
an NDA are $N$-equivalent then they are also $D$- (that is,
language-) equivalent. For PA's, a similar statement holds. One
further contribution of this paper is a proof of these statements,
once and for all for all $FT$-coalgebras under consideration.

Coalgebras of type $FT$ were studied in~\cite{Lenisa99,bartels,jacobs05}. In~\cite{bartels,jacobs05}
the main concern was definitions by coinduction, whereas in~\cite{Lenisa99} a proof principle was also presented.
 All in all, the present paper can be seen as the understanding of the aforementioned papers from a new perspective, presenting a uniform view on various automata constructions and equivalences.

The structure of the paper is as follows. After preliminaries
(Section~\ref{sec:prelim}) and
the details of the motivating examples above (Section~\ref{sec:motiv}),
Section~\ref{sec:general} presents the general construction
 as well as many more examples, including the coalgebraic chracterisation of pushdown automata (Section~\ref{pda}).
In Section~\ref{sec:bisim_implies_trace}, a large family of automata
(technically: functors) is characterised to which the constructions
above can be applied. Section~\ref{secbbat} contains the application of the framework in order to recover several 
interesting equivalences stemming from the world of concurrency, such as failure and ready semantics. 
Section~\ref{sec:discussion} discusses related work and presents pointers to future work.

This paper is an extended version of~\cite{FSTTCS}. Compared to the conference version, we include the proofs and more examples. More interestingly, the characterisation of pushdown automata coalgebraically (Section~\ref{pda}) and the
material in Section~\ref{secbbat} are original.


%
\section{Background}~\label{sec:prelim}
In this section we introduce the preliminaries on coalgebras and
algebras. First, we fix some notation on sets. We will denote sets
by capital letters $X,Y,\ldots$ and functions by lower case letters
$f,g,\dots$ Given sets $X$ and $Y$, $X\times Y$ is the cartesian
product of $X$ and $Y$ (with the usual projection maps $\pi_1$ and
$\pi_2$), $X+Y$ is the disjoint union (with injection maps
$\kappa_1$ and $\kappa_2$) and $X^Y$ is the set of functions $f\colon Y\to
X$. The collection of finite subsets of $X$ is denoted by $\pow
(X)$, while the collection of full-probability distributions with finite
support is $\mathcal D_\omega (X)
= \{f\colon X\to [0,1] \mid \text{$f$ finite support and } \sum_{x\in X} f(x) = 1\}$. For a set of letters
$A$, $A^*$ denotes the set of all words over $A$; $\epsilon$ the
empty word; and $w_1\cdot w_2$ (and $w_1w_2$) the concatenation of
words $w_1,w_2 \in A^*$.

\subsection{Coalgebras} A coalgebra is a pair $(X, f\colon
X\to F(X))$, where $X$ is a set of states and $F\colon
\mathbf{Set}\to \mathbf{Set}$ is a functor.
 The functor $F$, together with the
function $f$, determines the {\em transition structure} (or
dynamics) of the $F$-coalgebra~\cite{Rutten00}.

An {\em $F$-homomorphism\/} from an $F$-coalgebra $(X,f)$ to an
$F$-coalgebra $(Y,g)$ is a function $h\colon \, X \to Y$ preserving the
transition structure, {\em i.e.}, $g\circ h = F(h) \circ f$.

An $F$-coalgebra $(\Omega,\omega)$ is said to be {\em final} if for
any $F$-coalgebra $(X,f)$ there exists a unique $F$-homomorphism
 $\bb{-}_X\colon X\to \Omega$.
 All the functors considered in examples in this paper have a final coalgebra.

Let $(X, f)$ and $(Y,g)$ be two $F$-coalgebras. We say that the
states $x\in X$ and $y\in Y$ are {\em behaviourally equivalent},
written $x\sim_F y$, if and only if they are mapped into the same
element in the final coalgebra, that is $\bb{x}_X =
\bb{y}_Y$.

For weak pullback preserving functors, behavioural equivalence coincides with the usual notion 
of bisimilarity~\cite{Rutten00}.
   
\subsection{Algebras}
Monads can be thought of as a generalization of algebraic theories.
A \emph{monad} $\T = (T,\mu,\eta)$ is a triple consisting of an
endofunctor $T$ on $\Set$ and two natural transformations: a
\emph{unit} $\eta\colon \mathit{Id} \Rightarrow T$ 
and a \emph{multiplication} $\mu\colon
T^2 \Rightarrow T$.
They satisfy the following commutative laws
\[
\mu \circ \eta_T = id_T = \mu \circ T\eta
\;\;\;\mbox{and}\;\;\;
\mu \circ \mu_T = \mu \circ T\mu.
\]
Sometimes it is more convenient to represent a monad $\T$, equivalently, as a \emph{Kleisli triple}
$(T, (\_)^\sharp, \eta)$~\cite{Man76}, where $T$ assigns a set $T(X)$ to each
set $X$, the unit $\eta$ assigns a function $\eta_X \colon X \to T(X)$  to each set $X$,
and the extension operation $(\_)^\sharp$  assigns  to each
$f \colon X \rightarrow T(Y)$ a function $f^\sharp \colon T(X) \rightarrow T(Y)$, such that,
\[
f^\sharp \circ \eta_X = f
\;\;\;\;\;\;
(\eta_X)^\sharp = id_{T(X)}
\;\;\;\;\;\;
(g^\sharp \circ f)^\sharp
= g^\sharp \circ f^\sharp \,,
\]
for $g\colon Y  \rightarrow T(Z)$. Monads are frequently referred to
as \emph{computational types}~\cite{Moggi}. We list now a few examples. In what follows, $f\colon X\to T(Y)$ and $c\in T(X)$.

\paragraph{\textbf{Nondeterminism}} $T(X) = \pow(X)$; $\eta_X$ is the singleton map $x\mapsto \{x\}$;  $f^\sharp (c) = \bigcup_{x\in c} f(x)$.
\paragraph{\textbf{Partiality}} $T(X)  = 1+ X $ where $1=\{*\}$ represents a terminating (or diverging) computation; $\eta_X$ is the injection map $\kappa_2 \colon X\to 1+X$; $f^\sharp(\kappa_1(*)) = \kappa_1(*)$ and $f^\sharp(\kappa_2(x)) = f(x)$.

\smallskip

\noindent\textbf{Further examples} of monads include: exceptions ($T(X)= E +X$), side-effects
($T(X) = (S \times X)^S$), interactive output ($T(X) =  \mu v. X +
(O\times v) \cong O^*\times X$) and full-probability ($T(X) = \mathcal
D_\omega (X)$). We will use all these monads in our examples and we
will define $\eta_X$ and $f^\sharp$ for each later in
Section~\ref{sec:examples}.

\medskip

A $\T$\emph{-algebra} of a monad $\T$ is a pair $(X,h)$ consisting of a set $X$, called
carrier, and a function $h\colon T(X) \rightarrow X$  such that $h \circ \mu_X = h \circ Th$
and $h \circ \eta_X = id_X$. A $T$-homomorphism between two $\T$-algebras $(X,h)$ and $(Y,k)$
is a function $f\colon X \to Y$ such that $f \circ h = k \circ Tf$. $\T$-algebras and their
homomorphisms form the so-called \emph{Eilenberg-Moore category}
$\Set^\T$. There is a forgetful
functor $U^\T\colon\Set^\T \to \Set$ defined by
\[
U^\T((X,h)) = X
\;\;\;\mbox{and}\;\;\;
U^\T(f\colon(X,h)\rightarrow (Y,k)) = f\colon X \rightarrow Y \,.
\]

The forgetful functor $U^\T$ has left adjoint $X \mapsto (T(X),\mu_X\colon TT(X) \to T(X))$,
mapping a set $X$ to its free $\T$-algebra. If $f \colon X \rightarrow Y$ with $(Y,h)$
a $\T$-algebra, the unique $\T$-homomorphism $f^\sharp \colon (T(X), \mu_X) \rightarrow (Y, h)$
with $f^\sharp \circ \eta_X = f$ is given by
\[
\xymatrix{
f^\sharp\colon T(X) \ar[r]^{Tf} & T(Y) \ar[r]^{h} & Y \,.
}
\]

The function $f^\sharp\colon (T(X), \mu_X) \rightarrow (T(Y), \mu_Y)$ coincides with
function extension for a Kleisli triple.
%
For the monad $\pow$ the associated Eilenberg-Moore category is the category of
join semi-lattices, whereas for the monad $1+-$ is the category of pointed sets.
\section{Motivating examples}~\label{sec:motiv}
In this section, we introduce two motivating examples. We will present two constructions,  the determinization of a non-deterministic
automaton and the totalization of a partial automaton, which we will later show to be an instance of the same, more general, construction.

\subsection{Non-deterministic automata}
A deterministic automaton (DA) over the input alphabet $A$ is a pair
$(X,<o,t>)$, where $X$ is a set of states and $<o,t> \colon X \to
2\times X^A$ is a function with two components: $o$, the output
function, determines if a state $x$ is final ($o(x) = 1$) or not
($o(x) = 0$); and $t$, the transition function, returns for each
input letter $a$ the next state. DA's are coalgebras for the functor
$2\times \mathit{Id}^A$. The final coalgebra of this functor is
$(2^{A^*},<\epsilon, (-)_a>)$ where $2^{A^*}$ is the set of
languages over $A$ and $<\epsilon, (-)_a>$, given a language $L$,
determines whether or not the empty word is in the language
($\epsilon(L) =1$ or $\epsilon(L)=0$, resp.) and, for each input
letter $a$, returns the {\em derivative} of $L$: $L_a = \{ w \in A^*
\mid aw\in L\}$.
From any DA, there is a unique map $l$ into $2^{A^*}$ which assigns
to each state its behaviour (that is, the language that the state
recognizes).
\[
\xymatrix@C=2cm@R=1.3cm{X \ar@{-->}[r]^{l}\ar[d]_{<o,t>} &
2^{A^*}\ar[d]^{<\epsilon, (-)_a>}\\
2\times X^A\ar@{-->}[r]_-{\mathit{id}\times l^A} & 2\times (2^{A^*})^A }
\]
A non-deterministic automaton (NDA) is similar to a DA but the
transition function gives a set of next-states for each input letter
instead of a single state. Thus, an NDA over the input alphabet $A$
is a pair $(X,<o,\delta>)$, where $X$ is a set of states and
$<o,\delta> \colon X \to 2\times (\pow(X))^A$ is a pair of functions
with $o$ as before and where $\delta$ determines for each input
letter $a$ a set of possible next states. In order to compute the
language recognized by a state $x$ of an NDA $\mathcal A$, it is
usual to first determinize it, constructing a DA
$\mathbf{det}(\mathcal A)$ where the state space is $\pow(X)$, and
then compute the language recognized by the state $\{x\}$ of
$\mathbf{det}(\mathcal A)$. Next, we describe in coalgebraic terms
how to construct the automaton $\mathbf{det}(\mathcal A)$.

%
%
%
Given an NDA $\mathcal A = (X, <o,\delta>)$, we construct
$\mathbf{det}(\mathcal A) = (\pow(X), <\overline o,t>)$, where, for
all $Y\in \pow(X)$, $a\in A$, the functions $\overline o\colon\pow(X) \to
2$ and $t\colon\pow(X) \to \pow(X)^A$ are
$$\overline o (Y) = \begin{cases} 1 & \exists_{y\in Y}
o(y) =1\\ 0 &\text{otherwise}\end{cases} \qquad t(Y)(a) =
\bigcup\limits_{y\in Y}\delta(y)(a)\text{.}$$ 
(Observe that these definitions exploit the join-semilattice structures of $2$ and $\pow(X)^A$).

The automaton
$\mathbf{det}(\mathcal A)$  is such that the language $l(\{x\})$
recognized by $\{x\}$ is the same as the one recognized by $x$ in
the original NDA $\mathcal A$ (more generally, the language
recognized by state $X$ of $\mathbf{det}(\mathcal A)$ is the union
of the languages recognized by each state $x$ of $\mathcal A$).


We summarize the situation above with the following commuting diagram:
\[
\xymatrix@R=1.5cm{ X \ar[d]_-{<o,\delta>}\ar@{->}[r]^-{\{\cdot\}}&
\pow(X)\ar[dl]^-{<\overline o,t>}\ar@{-->}^{l}[r] & 2^{A^*}\ar[d]^{<\epsilon, (-)_a>}\\
2\times \pow(X)^A\ar@{-->}[rr]_-{\mathit{id} \times l^A} && 2\times (2^{A^*})^A }
\]
We note that the language semantics of NDA's,
presented in the above diagram, can also be obtained as an instance of
the abstract definition scheme of $\lambda$-coinduction~\cite{bartels,jacobs05}.
\subsection{Partial automata}
A partial automaton (PA) over the input alphabet $A$ is a pair
$(X,<o,\partial>)$ consisting of a set of states $X$ and a pair of
functions $<o,\partial> \colon X \to 2\times (1+X)^A$.
Here $o\colon X \to
2$ is the same as with DA. The second function $\partial \colon X\to (1+X)^A$ is
a transition function that sends any state $x \in X$ to a function
$\partial(x)\colon A \to 1+X$, which for any input letter $a \in A$
is either undefined (no $a$-labelled transition takes place)
or specifies the next state that is reached.
 PA's are coalgebras for the functor
$2\times (1+\mathit{Id})^A$.  Given a PA $\mathcal A$, we can
construct a total (deterministic) automaton
$\mathbf{tot}(\mathcal A)$ by adding an extra {\em sink} state to
the state space: every undefined $a$-transition from a state $x$ is then
replaced by a $a$-labelled transition from $x$ to the sink state.
More precisely, given a  PA $\mathcal A =(X,<o,\partial>)$, we construct
$\mathbf{tot}(\mathcal A) = (1+X, <\overline o, t>)$,
where
\[
\begin{array}{l}
\overline o(\kappa_1(*)) = 0\\
 \overline o(\kappa_2(x)) = o(x)\\
\end{array}\hspace{1cm}
\begin{array}{l}
t(\kappa_1(*))(a) = \kappa_1(*)\\
t(\kappa_2(x))(a) = \partial(x)(a)\\
\end{array}
\]
(Observe that these definitions exploit the pointed-set structures of $2$ and $1+X$).

The language $l(x)$ recognized by a state $x$ will be precisely the
language recognized by $x$ in the original partial automaton.
Moreover, the new sink state recognizes the empty language. Again we
summarize the situation above with the help of following commuting
diagram, which illustrates the similarities between both
constructions:
\[
\xymatrix@R=1.5cm{ X \ar[d]_-{<o,\partial>}\ar@{->}[r]^-{\kappa_2}&
1+X\ar[dl]^-{<\overline o,{t}>}\ar@{-->}^{l}[r] & 2^{A^*}\ar[d]^{<\epsilon, (-)_a>}\\
2\times (1+X)^A\ar@{-->}[rr]_-{\mathit{id} \times l^A}  && 2\times (2^{A^*})^A }
\]
\section{Algebraically structured coalgebras}\label{sec:general}

In this section we present a general framework where both motivating
examples can be embedded and uniformly studied. We will consider
coalgebras for which the functor type~$FT$ can be decomposed into a
transition type $F$ specifying the relevant dynamics of a system and
a monad $T$ providing the state space with an algebraic structure.
For simplicity, we fix our base category to be $\Set$.


We study coalgebras $f\colon X \to FT(X)$ for a functor $F$ and a
monad $\T$ such that $FT(X)$ is a $\T$-algebra, that is $FT(X)$ is the carrier of a $\T$-algebra
$(FT(X),h)$. In the motivating examples, $F$ would be instantiated to
$2\times Id^A$ (in both) and $T$ to $\pow$, for NDAs, and to $1+-$ for
PAs. The condition that $FT(X)$ is a $\T$-algebra would amount to
require that $2\times \pow (X)^A$ is a join-semilattice, for NDAs, and
that $2\times (1+X)^A$ is a pointed set, for PAs. This is indeed the
case, since the set $2$ can be regarded both as a join-semilattice
($2\cong \pow(1)$) or as a pointed set ($2\cong 1+1$) and, moreover,
products and exponentials preserve the algebra structure.

The inter-play between the transition type $F$ and the
computational type $\T$ (more precisely, the fact that $FT(X)$ is a $\T$-algebra) allows each coalgebra $f\colon X
\rightarrow FT(X)$ to be extended uniquely to a $T$-algebra
morphism $ f^\sharp\colon(T(X),\mu_X) \rightarrow (FT(X),h)$ which
makes the following diagram commute.
\[
\xymatrix@R=1.5cm{
X \ar[d]_-{f}\ar[r]^-{\eta_X} & T(X)\ar[dl]^-{f^\sharp}\\
FT(X) &
}\ \ \ \ \ \ \  f^\sharp \circ \eta_X = f
\]
Intuitively, $\eta_X\colon X \rightarrow T(X)$ is the
inclusion of the state space of the coalgebra $f\colon X \rightarrow
FT(X)$ into the structured state space $T(X)$, and $f^\sharp\colon T(X)
\rightarrow FT(X)$ is the extension of the coalgebra $f$ to $T(X)$.

Next, we study the behaviour of a given state or, more
generally, we would like to say when two states $x_1$ and $x_2$ are
equivalent. The obvious choice for an equivalence would be
$FT$-behavioural equivalence. However, this equivalence is not
exactly what we are looking for. In the motivating example of
non-deterministic automata we wanted two states to be equivalent if
they recognize the same language. If we would take the equivalence
arising from the functor $2\times \pow(\mathit{Id})^A$ we would be
distinguishing states that recognize the same language but have
difference branching types, as in the following example.
\[
\xymatrix@R=0.6cm@C=0.45cm{&\bullet \ar[d]^{a} &&\hspace{2cm} && \bullet \ar[dr]^{a}\ar[dl]_{a}\\
 &\ar[dr]^{c}\ar[dl]_{b}& && \ar[d]_{b}&&\ar[d]^{c}\\
 \bullet && \bullet &&\bullet&&\bullet }
\]
We now define a new equivalence, which {\em absorbs} the effect of the monad $T$.

We say that two elements $x_1$ and $x_2$ in $X$ are
\emph{$F$-equivalent with respect to a monad} $\T$, written $x_1
\approx_F^T x_2$, if and only if $\eta_X(x_1) \sim_F \eta_X(x_2)$.
The equivalence $\sim_F$ is just $F$-behavioural equivalence for the
$F$-coalgebra $f^\sharp \colon T(X) \to FT(X)$.

If the functor $F$ has a final coalgebra  $(\Omega, \omega)$ , we can capture the semantic equivalence above in the following commuting diagram
\begin{eqnarray}
\label{F-final}
\xymatrix@C=2cm@R=1.5cm{
X \ar[d]_{f}\ar[r]^-{\eta_X} & T(X)\ar[dl]^-{f^\sharp}\ar@{-->}[r]^{\bb{-}} & \Omega \ar[d]^{\omega}\\
FT(X)\ar@{-->}[rr]_-{F\bb{-}}  && F(\Omega)
}
\end{eqnarray}
Returning to our first example, two states $x_1$ and $x_2$ of an NDA (in
which $T$ is instantiated to $\pow$ and $F$ to
$2\times\mathit{Id}^A$) would satisfy $x_1\approx_F^T x_2$ if and
only if they recognize the same language (recall that the final
coalgebra of  the functor $2\times\mathit{Id}^A$ is $2^{A^*}$).

It is also interesting to remark the difference between the two
equivalences in the case of partial automata.
%
%
The coalgebraic semantics of PAs~\cite{jan99} is given in terms of pairs of
prefix-closed languages $<V,W>$ where $V$ contains the
words that are accepted (that is, are the label of a path leading to
a final state) and $W$ contains all words that label any path (that
is all that are in $V$ plus the words labeling paths leading to
non-final states). We describe $V$ and $W$  in the
following two examples, for the states $s_0$ and $q_0$:
\[
\begin{array}{llll}
\begin{array}{l}
W = c^* + c^*b + c^*ab^*\\
V = c^*ab^*
\end{array}
& \xymatrix@C=0.3cm@R=0.2cm{
s_0 \ar[dr]_-b \ar[rr]^-a \ar@(l,u)^{c} & & *++[o][F=]{s_1}  \ar@(r,d)^{b} \\
&s_2} & \xymatrix@C=0.2cm@R=0.2cm{q_0 \ar[rr]^-a
\ar@(l,u)^{c}&&*++[o][F=]{q_1}\ar@(r,d)^{b}} &
\begin{array}{l}
W = c^* + c^*ab^*\\
V = c^*ab^*
\end{array}
\end{array}
\]
%
%
%
Thus, the states $s_0$ and $q_0$ would be distinguished by
$FT$-equivalence (for $F=2\times Id^A$ and $T=1+-$) but they are
equivalent with respect to the monad $1+-$, $s_0\approx_F^{T} q_0$,
since they accept the same language.

We will show in Section~\ref{sec:bisim_implies_trace} that the
equivalence $\sim_{FT}$ is always contained in $\approx^T_F$.
\subsection{Examples}\label{sec:examples}
In this section we show more examples of applications of the
framework above.

\subsubsection{Partial Mealy machines}
A partial Mealy machine is a set of states $X$ together with a
function $t\colon X\to (B\times (1+X))^A$, where $A$ is a set of
inputs and $B$ is a set of output values. We assume that
$B$ has a distinguished element
$\bot \in B$. For each state $x$ and for each input $a$ the automaton
produces an output value and either terminates or continues to a
next state.  Applying the framework above we will be
\emph{totalizing} the automaton, similarly to what happened in the
example of partial automata, by adding an extra state to the state
space which will act as a sink state.  The behaviour of the
totalized automaton is given by the set of causal functions from
$A^\omega$ (infinite sequences of $A$) to $B^\omega$, which we
denote by $\Gamma(A^\omega, B^\omega)$~\cite{jan_mealy}. A function
$f\colon A^\omega \to B^\omega$ is causal if, for $\sigma\in
A^\omega$, the $n$-th value of the output stream $f(\sigma)$ depends
only on the first $n$ values of the input stream $\sigma$.
In the diagram below, we define the final map
$\bb{-} \colon 1+X  \to \Gamma(A^\omega, B^\omega)$:
\[
\xymatrix{ X \ar[dd]_{t}\ar@{->}[r]^-{\kappa_2}&
1+X\ar[ddl]^-{t^\sharp}\ar@{-->}[rrr]^{\bb-}
\ar@{}[ddrrr]|{\small\begin{array}{l}
\bb{\kappa_1(*)} (\sigma) = (\bot, \bot, \ldots) \\
\bb{\kappa_2(x)} (a \,: \, \tau) =  b \, : \, (\bb{z}(\tau))\\
\hspace{1cm} \text{ where } t(x)(a) = <b,z>
\end{array}}
&&&
\Gamma(A^\omega, B^\omega)\ar[dd]\\\\
(B\times (1+X))^A\ar@{-->}[rrrr] &&&& (B\times \Gamma(A^\omega,
B^\omega))^A }
\]
Here $* \in 1$, $x \in X$, $a \in A$, $b \in B$,
$\sigma \in A^\omega$, $z \in 1+X$,
and $a: \tau$ denotes the prefixing of the stream $\tau \in A^\omega$ with the element $a$.

\subsubsection{Structured Moore automata}

In the following examples we look at the functor
\[
F(X) = T(B) \times X^A
\]
for arbitrary sets $A$ and $B$ and an arbitrary monad
$\T = (T, \eta, (-)^\sharp)$. The coalgebras of $F$ represents Moore
automata with outputs in $T(B)$ and inputs in $A$. Since $T(B)$ is a $\T$-algebra, $T(X)^A$ is a $\T$-algebra 
and the product of $\T$-algebras is still a $\T$-algebra, then $FT(X)$ is a $\T$-algebra. 
For this reason, 
the (pair of) functions
$o \colon X \to T(B)$ and $t \colon X \to T(X)^A$
lift to a (pair of) functions
\[
o^\sharp \colon T(X) \to T(B)
\;\;\;\;\;\;
t^\sharp  \colon T(X) \to T(X)^A
\]
The final coalgebra of $F$ is
$T(B)^{A^*}$. We can characterize the final map $\bb{-} \colon T(X) \to T(B)^{A^*}$,
for all $m \in T(X)$, $a \in A$ and $w \in A^*$, by
\[
\xymatrix@R=1.5cm{
X \ar[d]_{<o,t>}\ar[r]^{\eta_X} &
T(X)\ar@{}[drrr]|{\small\begin{array}{l}
\bb{m }(\epsilon) = o^\sharp(m)\\
\bb{m}(a \cdot w) = \bb{t^\sharp(m)(a)}(w)\\
\end{array}  }
\ar[dl]_-{<o^\sharp,t^\sharp>}\ar@{-->}[rrr]^{\bb{-}} &&& T(B)^{A^*}
\ar[d]^{<\epsilon, (-)_a>}\\
T(B) \times T(X)^A\ar@{-->}[rrrr] &&&& T(B)\times (T(B)^{A^*})^A
} %
\]
Below we shall look at various concrete instances of this scheme,
for different choices of the monad $T$.

\paragraph{\em Moore automata with exceptions}
Let $E$ be an arbitrary set, the elements of which we think of
as exceptions. We consider the \emph{exception monad}
$T(X) = E + X$ which has the function $\eta(x) =
\kappa_2(x)$ as its unit. We define the lifting
$f^\sharp\colon T(X)\to T(Y)$,
for any function $f\colon X \to T(Y)$,
by $f^\sharp = [\mathit{id}, f]$.

An $FT$-coalgebra $<o,t>\colon X\to (E+B) \times (E+X)^A$ will
associate with every state $x$ an output value (either in $B$ or an
exception in $E$) and, for each input $a$, a next state or an
exception. The behaviour of a state $x$, given by $\bb{\eta(x)}$,
will be a formal power series over $A$ with output values in $E +B$;
that is, a function from $A^*$ to $E+B$. The final map
is defined as follows, for all $e \in E$, $x \in X$, $a \in A$, and
$w \in A^*$:
\[
\xymatrix@R=0.75cm{
X \ar[dd]_{<o,t>}\ar@{->}[r]^-{\kappa_2}&
E+X\ar[ddl]_-{<o^\sharp,t^\sharp>}\ar@{-->}[rrr]^{\bb-}
\ar@{}[ddrrr]|{\small\begin{array}{l}
\bb{\kappa_1(e)} (w) = \kappa_1(e) \\
\bb{\kappa_2(x)} (\epsilon) =   o(x)\\
\bb{\kappa_2(x)} (a \cdot w) = \bb{t(x)(a)}(w) \\
\end{array}}
&&&
(E+B)^{A^*}\ar[dd]\\\\
(E+B)\times (E+X)^A\ar@{-->}[rrrr] &&&& (E+B)\times ((E+B)^{A^*})^A
}
\]
\paragraph{\em Moore automata with side effects}
Let $S$ be an arbitrary set of so-called \emph{side-effects}.
We consider the monad $T(X) = (S\times
X)^S$, with unit $\eta$ defined, for all $x \in X$
and $s\in S$, by $\eta(x)(s) = <s,x>$.
We define the lifting $f^\sharp\colon T(X)\to T(Y)$
of a function $f\colon X \to T(Y)$
by $f^\sharp (g) (s) = f(x)(s')$, for any $g \in T(X)$ and $s\in S$,
and with $g(s) =<s',x>$.

Consider an $FT$-coalgebra $<o,t>\colon X\to (B\times S)^S \times
((S\times X)^S)^A$ and let us explain the intuition behind this
type of automaton type.
The set $S\times X$
can be interpreted as the configurations of the automaton, where $S$
contains information about the state of the system and $X$ about the
control of the system.
Using the isomorphism $X \to (S\times
B)^S\cong S\times X \to S\times B$,
we can think of $o\colon X \to (S\times
B)^S$ as a function that for each configuration in $S\times X$ provides
an output in $B$ and the new state of the system in $S$.
The transition function $t\colon
X\to ((S\times X)^S)^A$ gives a new configuration for each input
letter and current configuration, using again the fact that $X\to
((S\times X)^S)^A \cong S\times X \to (S\times X)^A$.
In all of this, a concrete instance of the set of side-effects
could be, for example, the set $S=V^L$ of
functions associating memory locations to values.

The behaviour of a state $x \in X$ will be given by $\bb{\eta(x)}$, where
the final mapping is as follows. For all $g \in (S \times X)^S$, $s \in S$,
$a \in A$ and $w \in A^*$, and
with $g(s) = <s',x>$, we have
\[
\xymatrix@R=0.75cm{
X \ar[dd]_{<o,t>}\ar@{->}[r]^-{\eta}&
(S\times X)^S\ar[ddl]_-{<o^\sharp,t^\sharp>}\ar@{-->}[rrr]^{\bb-}
\ar@{}[ddrrr]|{\small\begin{array}{l}
\bb{g} (\epsilon) (s) = o(x)(s') \\
\bb{g} (a \cdot w) = \bb{\lambda s. t(x)(a)(s')}(w) \\
\end{array}}
&&&
((B\times S)^S)^{A^*}\ar[dd]\\\\
(B\times S)^S \times ((S\times X)^S)^A \ar@{-->}[rrrr] &&&&
(B\times S)^S \times (((B \times S)^S)^{A^*})^A
}
\]

%

\paragraph{\em Moore automata with interactive output}

Let $O$ be an arbitrary set of \emph{outputs}. Consider the interactive output monad defined by
the functor $T(X) =  \mu v. X + (O\times v) \cong O^*\times X$ together with the natural transformation
$\eta_X = \lambda x \in X. <\epsilon, x>$, and for which the lifting
$f^\sharp \colon T(X) \to T(Y)$ of a function $f\colon X\to T(Y)$ is given by
$f^\sharp (<w,x>) = <ww',y>  \text{ with } f(x) = <w',y> $.
We consider  $FT$-coalgebras
\[
<o,t>\colon X
\to
(O^*\times B) \times (O^*\times X)^A
\]
For $B=1$, the above coalgebras coincide with \emph{(total) subsequential transducers}~\cite{helle}: 
$o\colon X\to O^*$ is the final output function; $t \colon X \to (O^* \times X)^A$ is the
pairing of the output function and the next state-function.

The behaviour of a state $x$ will be given by 
$\bb{\eta(x)} = \bb{<\epsilon, x>}$, where, for every 
$<w,x>\in O^*\times X$, $\bb{<w,x>}\colon A^* \to O^*$, is given by
\[
\begin{array}{lcl@{\hspace{1.5cm}}lcl}
\bb{<w,x>} (\epsilon) &=& w\cdot o(x) &
\bb{<w,x>} (aw_1) &=& w \cdot (\bb{t(x)(a)}(w_1))
\end{array}
\]

\paragraph{\em Probabilistic Moore automata}
Consider the monad of probability distributions defined,
for any set $X$, by
\[
T(X) =\mathcal D_\omega(X)
\]
Its unit is given by the Dirac distribution, defined for $x,x' \in X$ by
\[
\eta(x)(x') = \,
\begin{cases} 1 & x=x' \\ 0 &\text{otherwise}\end{cases}
\]
The lifting $f^\sharp \colon T(X) \to T(Y)$
of a function $f\colon X\to T(Y)$ is given,
for any distribution $c \in \mathcal D_\omega(X)$
and any $y \in Y$,  by
\[
f^\sharp(c)(y) = \,  \sum_{d\in \mathcal D_\omega(Y)} \left(\sum_{x\in
f^{-1}(d)} c(x)\right)\times
d(y)
\]
We will consider $FT$-coalgebras
\[
<o,t> \colon X \to  \mathcal
D_\omega(B) \times \mathcal
D_\omega(X)^A
\]
More specifically, we take $B=2$ which implies $\mathcal D_\omega(2) \cong
[0,1]$. For this choice of $B$, the above $FT$-coalgebras are precisely
the \emph{(Rabin) probabilistic
automata}~\cite{rabin}.
Each state $x$ has an output value in $o(x) \in [0,1]$ and, for each
input $a$, $t(x)(a)$ is a probability distribution of next states. The
behaviour of a state $x$ is given by $\bb{\eta(x)}\colon A^* \to
[0,1]$, defined below. Intuitively, one can think of $\bb{\eta(x)}$
as a probabilistic language: each word is associated with a value
$p\in[0,1]$.
The final mapping
\[
\xymatrix@R=0.75cm{
X \ar[dd]_{<o,t>}\ar@{->}[r]^-{\eta}&
\mathcal D_\omega(X)
\ar[ddl]_-{<o^\sharp,t^\sharp>}\ar@{-->}[rrr]^{\bb-}
&&&
[0,1]^{A^*}\ar[dd]\\\\
[0,1] \times \mathcal D_\omega(X)^A \ar@{-->}[rrrr] &&&&
[0,1] \times ([0,1]^{A^*})^A
}
\]
is given, for any $d \in \mathcal D_\omega(X)$, $x\in X$, $a \in A$,
and $w \in A^*$, by
\[
\begin{array}{lcl}
\bb{d} (\epsilon) &=& \sum\limits_{b\in[0,1]}(\sum\limits_{o(x) = b}
d(x))\times b\\
\bb{d} (aw) &=& \bb{\lambda x' . \sum\limits_{c\in
\mathcal D_\omega(X)} (\sum_{b=t(x)(a)} d(x)) \times  c(x')} (w)
\end{array}
\]
It is worth noting that this exactly captures the semantics of
\cite{rabin}, while the ordinary $\sim_{FT}$ coincides with
\emph{probabilistic bisimilarity} of \cite{LarsenS91}.
Moreover $\approx_F^T$ coincides 
with the trace semantics of probabilistic transition systems defined in \cite{HJS} (see Section 7.2 of \cite{JSS}).

\subsection{Pushdown automata, coalgebraically}\label{pda}
\newcommand\pda{{\sc pda}}

Recursive functions in a computer program lead naturally to a stack of recursive function
calls during the execution of the program. In this section, we provide a coalgebraic model
of automata equipped with a stack memory. A \emph{pushdown machine} is a tuple $(Q,A,B,\delta)$,
where $Q$ is set of control locations (states), $A$ is a set of input symbols, $B$ is a set of stack symbols,
and $\delta$ is finite subset of $Q \times A \times B \times Q \times B^*$, called the set of
transition rules. Note that we do not insist on the sets $Q$, $A$ and $B$ to be finite and
consider only \emph{realtime} pushdown machines, i.e. without internal
transitions (also called $\epsilon$-transitions)~\cite{HU79}. A {\em configuration} $k$
of a pushdown machine  is a pair $<q,\beta>$ denoting the current control state
$q \in Q$ and the current content of the stack $\beta \in B^*$. In denoting the
stack as a string of stack symbols we assume that the topmost symbol is written
first. There is a transition $<q,b\beta> \xrightarrow {a} <q',\alpha\beta>$ if $<q',\alpha> \in \delta(q,a,b)$.
A convenient notation is to introduce for any string $w \in A^*$ the transition relation on configurations
as the least relation such that
\begin{enumerate}[(1)]
\item $k \xrightarrow {\epsilon} k$
\item $k \xrightarrow {aw} k'$ if and only if $k \xrightarrow {a} k''$ and $k'' \xrightarrow {w} k'$.
\end{enumerate}
A \emph{pushdown automaton} (\pda) is a pushdown machine together with an initial configuration $k_0$ and a
set $K$ of accepting configurations. The sets of accepting configurations usually considered are
(1)  the set $F \times B^*$, where $F \subseteq Q$ is called the set of accepting states, or
(2) $Q \times \{ \epsilon \}$, but also (3) $F \times \{ \epsilon \}$ for $F \subseteq Q$, or
(4) $Q \times B'B^*$ for $B'$ a subset of $B$.
A word $w \in A^*$ is said to be accepted by  a \pda\ $(Q,A,B,\delta,k_0,K)$ if
$k_0  \xrightarrow {w} k$  for some $k \in K$. A \pda\ with accepting configurations as in (1)
is said to be with accepting states, whereas, when they are as in (2) then the \pda\ is said to be 
accepting by empty stack. They both accept exactly proper context free languages (i.e. context 
free languages without the empty word)~\cite{ABB97}. 

Computations in a pushdown machine are generally non-deterministic and can cause a change in
the control state of the automaton as well as in its stack. For this reason we will model
the effects of the computations by means of the so-called \emph{non-deterministic side-effect}
monad~\cite{BHM00}. For a set of states $S$, let $T$ be the functor $\pow(- \times S)^S$. It is a
monad when equipped with the unit $\eta_X\colon X \to T(X)$, defined by $\eta(x)(s) = \{ <x,s> \}$,
and the multiplication $\mu_X\colon T(T(X)) \to T(X)$ given by
\[
\mu_X(k)(s) = \bigcup_{<c,s'> \in k(s)} c(s')
\]
Note that, for a function $f \colon X \to T(Y)$, the extension $f^\sharp\colon T(X) \to T(Y)$ is defined by
\[
f^\sharp(c)(s) = \bigcup_{<x',s'> \in c(s)} f(x')(s') \,.
\]
Examples of algebras for this monad are $T(1) = \pow(S)^S$ and $2^S$. The latter can in fact
be obtained as a quotient of the former by equating those functions $k_1,k_2\colon S \to \pow(S)$
such that for all $s \in S$, $k_1(s) = \emptyset$ if and only if $k_2(s) = \emptyset$.

Every pushdown machine $(Q,A,B,\delta)$ together with a set of accepting configurations $K$
induces a function $<o,t>\colon Q \to FTQ$ where $F$ is  the functor $2^{B^*} \times id^A$ and
$T$ is the monad defined above specialized for $S=B^*$ (intuitively, side effects in a pushdown machine
are changes in its stack). The functions $o\colon Q \to 2^{B^*}$ and $t\colon Q \to \pow(Q\times B^{*})^{{B^*}^A}$ are defined as
\[
\begin{array}{lcl}
o(q)(\beta) &=& 1  \;\; \mbox{if and only if $<q,\beta> \in K$}
\\
t(q)(a)(\epsilon) &=&  \emptyset \\ 
t(q)(a)(b\beta) &=& \{ <q',\alpha\beta> \mid <q',\alpha> \in \delta(q,a,b) \}
\end{array}
\]
The transition function $t$ describes the steps between \pda\ configurations and it is specified in
terms of the transition instructions $\delta$ of the original machine.

From the above is clear that not every function $<o,t>\colon Q \to FTQ$ defines a pushdown
machine with accepting configurations, as, for example, $t(q)$ may depend on the whole stack $\beta$ and not just on
the top element $b$. Therefore we restrict our attention to consider functions
$<o,t>\colon Q \to FTQ$ such that
\begin{enumerate}[(1)]
\item $t(q)(a)(\epsilon) =  \emptyset$
\item $t(q)(a)(b\beta) = \{ <q', \alpha\beta> \mid <q',\alpha> \in t(q)(a)(b) \}$,
\end{enumerate}
Every $<o,t>$ satisfying (1) and (2) above defines the pushdown machine $(Q,A,B,\delta)$ with
$\delta(q,a,b) = t(q)(a)(b)$ and with accepting configuration $K = \{ <q,\beta> | o(q)(\beta) = 1 \}$.
The first condition is asserting that a machine is in a deadlock configuration when the stack is empty,
while the last condition ensures that transition steps depend only on the control state
and the top element of the stack. For this reason we will write $q \xrightarrow {a,b|\alpha}  q'$
for $<q',\alpha \beta> \in t(q)(a)(b)$ indicating that the pushdown machine in the state $q$ by reading an
input symbol $a$ and popping $b$ off the stack, can move to a control state $q'$ pushing the string
$\alpha \in B^*$ on the current stack (here denoted by $\beta$).

Similarly to what we have shown in the examples of structured Moore automata, for every function
$<o,t>\colon Q \to FTQ$ there is a unique $F$-coalgebra map $\bb{-}\colon T(Q) \to 2^{{B^*}^{A^*}}$,
which is also a $T$-algebra homomorphism. It is defined for all $c\in \pow(Q \times B^*)^{B^*}$ and $\beta \in B^*$ as
\[
\xymatrix@R=0.75cm@C=.6cm{
Q \ar[dd]_{<o,t>}\ar@{->}[r]^(0.36){\eta}& \pow(Q \times B^*)^{B^*}\ar[ddl]^-{<o^\sharp, t^\sharp>}\ar@{-->}[r]^(0.6){\bb-}
&
{2^{B^*}}^{A^*}\ar[dd]\\\\
2^{B^*}\times { \pow(Q \times B^*)^{B^*}}^A \ar@{-->}[rr] &&2^{B^*}\times {{2^{B^*}}^{A^*}}^A
}\raisebox{-1cm}{$
\begin{array}{l}
\bb{\eta(q)}(\epsilon) = o(q)\\
\bb{\eta(q)}(aw) = \bb{\lambda \beta.t(q)(a)(\beta)}(w)\\
\bb{c}(\beta)=\bigcup\limits_{<q,\alpha> \in c(\beta)} \bb{\eta(q)}(\alpha)\,.
\end{array}$}
\]
We then have that a word $w \in A^*$ is \emph{accepted} by the \pda\ $(Q,A,B,\delta,k_0,K)$ with $k_0 = <q,\beta>$ if
and only if $\bb{\eta(q)}(w)(\beta) = 1$. 

\smallskip

The above definition implies that for a given word $w \in A^*$ we can decide
if it is accepted by $<o,t>\colon Q \to FTQ$ from an initial configuration $k_0 = <q,\beta>$
in exactly $|w|$ steps (assuming there is a procedure to decide whether $o(q)(\beta) = 1$). As a consequence, we cannot
use structured Moore automata to model Turing machines, for which the halting problem is undecidable: in general terms, 
for Turing machines, we would need internal transitions that do not consume input symbols.

%
%
\smallskip

We conclude with an example of our construction using a pushdown machine with control states $Q=\{q_0, q_1\}$, over an
input alphabet $A = \{ a,b \}$ and using stack symbols $B = \{ x,s\}$. The transitions rules $\delta$ are given below:
\begin{center}\[
\xymatrix{
q_0 \ar@(l,u)^{a,s|x} \ar@(r,d)^{a,x|xx} \ar[rr]^-{b,x|\epsilon} & & q_1 \ar@(r,u)_{b,x|\epsilon}
}
\]
\end{center}
We take $K=\{<q_0,\epsilon>,\, <q_1,\epsilon>\}$, meaning that $o(q_0)(\epsilon) = 1$, $o(q_1)(\epsilon) = 1$ 
and $o(q_i)(\beta) = 0$ in all other cases. By considering $k_0 = <q_0,s>$ as
initial configuration, we then have
\[
\bb{\eta(q_0)}(\epsilon)(s) = o(q_0)(s) = 0
\]
meaning that the empty word is not accepted by the \pda\ $(Q,A,B,\delta,k_0, K)$. However, the word $ab$ is accepted:
\[
\begin{array}{lcl}
\bb{\eta(q_0)}(ab)(s) & = & \bb{\lambda \beta.t(q_0)(a)(\beta)}(b)(s)\\
                      & = & \bigcup\limits_{<p,\beta> \in t(q_0)(a)(s)} \bb{\eta(p)}(b)(\beta)\\
                      & = & \bb{\eta(q_1)}(b)(x)\\
                      & = &  \bb{\lambda \beta.t(q_1)(b)(\beta)}(\epsilon)(x)\\
                      & = &  \bigcup\limits_{<p,\beta> \in t(q_1)(b)(x)} \bb{\eta(p)}(\epsilon)(\beta)\\
                      & = &  \bb{\eta(q_1)}(\epsilon)(\epsilon)\\
                      & = &  o(q_1)(\epsilon)\\
                      & = &  1 \,.
\end{array}
\]
In fact, the language accepted by the above pushdown automaton is $\{ a^nb^n \mid n \geq 1 \}$. 
The structured states $c_i\in TQ$, their transitions and their outputs of (part of) the associated Moore automaton are given in Figure \ref{fig:pda1}.
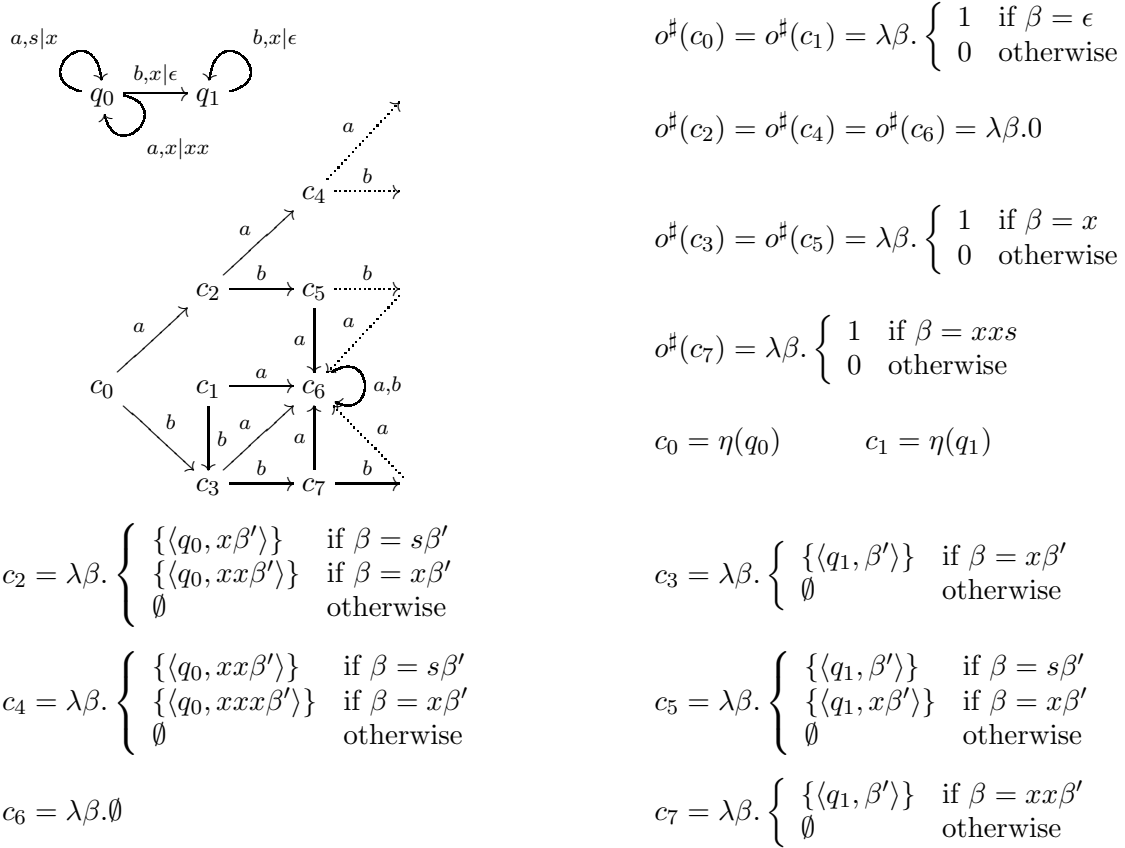
\begin{figure}
\begin{tabular}{lcl}
\multirow{8}{*}{$\xymatrix{
q_0 \ar@(l,u)^{a,s|x} \ar@(r,d)^{a,x|xx} \ar[r]^-{b,x|\epsilon}  & q_1 \ar@(r,u)_{b,x|\epsilon} & &\\
 &                              & c_4 \ar@{.>}[ru]^-{a} \ar@{.>}[r]^-{b} & \\
 & c_2 \ar[ru]^-{a} \ar[r]^-{b} & c_5 \ar[d]_-{a}\ar@{.>}[r]^-{b} & \ar@{.>}[ld]_-{a}\\
c_0 \ar[ru]^-{a}\ar[rd]^-{b} & c_1 \ar[r]^-{a} \ar[d]^-{b} & c_6  \ar@(ur,dr)^{a,b}& & & \\
    & c_3  \ar[ru]^-{a} \ar[r]^-{b}  & c_7 \ar[u]^-{a} \ar[r]^-{b}  & \ar@{.>}[lu]_-{a}& \\
}$}
& \;\;\; &
$o^\sharp(c_0) = o^\sharp(c_1) = \lambda \beta.\left\{
\begin{array}{ll}
1 & \mbox{if $\beta = \epsilon$}\\
0 & \mbox{otherwise}
\end{array}
\right.$
\\[2em]
& & $o^\sharp(c_2) = o^\sharp(c_4) = o^\sharp(c_6) = \lambda \beta.0$
\\[2em]
& & $o^\sharp(c_3) = o^\sharp(c_5) =\lambda \beta.\left\{
\begin{array}{ll}
1 & \mbox{if $\beta = x$}\\
0 & \mbox{otherwise}
\end{array}
\right.$
\\[2em]
& & $o^\sharp(c_7) = \lambda \beta.\left\{
\begin{array}{ll}
1 & \mbox{if $\beta = xxs$}\\
0 & \mbox{otherwise}
\end{array}
\right.$
\\[2em]
& & $c_0 = \eta(q_0)$ \;\;\;\;\;\;\;\;  $c_1 = \eta(q_1)$
\\[2.2em]
  $c_2 = \lambda \beta.\left\{
\begin{array}{ll}
\{<q_0,x\beta'>\} & \mbox{if $\beta = s\beta'$}\\
\{<q_0,xx\beta'>\}& \mbox{if $\beta = x\beta'$}\\
\emptyset & \mbox{otherwise}
\end{array}
\right.$
& & $c_3 = \lambda \beta.\left\{
\begin{array}{ll}
\{<q_1,\beta'> \} & \mbox{if $\beta = x\beta'$}\\
\emptyset & \mbox{otherwise}
\end{array}
\right.$
\\[2em]
$c_4 = \lambda \beta.\left\{
\begin{array}{ll}
\{<q_0,xx\beta'>\}& \mbox{if $\beta = s\beta'$}\\
\{<q_0,xxx\beta'>\} & \mbox{if $\beta = x\beta'$}\\
\emptyset & \mbox{otherwise}
\end{array}
\right.$
& &
$c_5 = \lambda \beta.\left\{
\begin{array}{ll}
\{<q_1,\beta'>\} & \mbox{if $\beta = s\beta'$}\\
\{<q_1,x\beta'>\} & \mbox{if $\beta = x\beta'$}\\
\emptyset & \mbox{otherwise}
\end{array}
\right.$
\\[2em]
$c_6 = \lambda \beta.\emptyset$
& &
$c_7 = \lambda \beta.\left\{
\begin{array}{ll}
\{<q_1,\beta'>\} & \mbox{if $\beta = xx\beta'$}\\
\emptyset & \mbox{otherwise}
\end{array}
\right.$
\end{tabular}\caption{
The structured states $c_i\in TQ$, their transitions and their output of (part of) the Moore automaton associated to the 
\pda\ $(Q,A,B,\delta,k_0,K)$ where $Q=\{q_0,q_1\}$, $A=\{a,b\}$, $B=\{x,s\}$, $\delta$ is depicted on the left top, $k_0 = <q_0,s>$ and $K=\{<q_0,\epsilon>,\, <q_1,\epsilon>\}$.}\label{fig:pda1}
\end{figure}

\medskip

\emph{Context-free grammars} generating proper languages (i.e. not containing the empty word $\epsilon$)
are equivalent to realtime \pda's~\cite{Cho62,Eve63,Sch63}. Given an input alphabet $A$, and a set of
variables $B$, let  $G = (A, B, s, P)$ be a context-free grammar in Greibach normal form~\cite{Gre67},
i.e. with productions in $P$ of the form  $b \to a\alpha$ with $b \in B$, $a \in A$ and $\alpha \in B^*$.
We can construct a function $<o,t> \colon 1 \to FT1$ (where $1=\{*\}$) by setting
\[
o(*)(\beta) = 1\; \mbox{if and only if}\; \beta = \epsilon
\;\;\;\;\; \mbox{ and }\;\;\;\;\;
t(*)(a)(b\beta) = \{ <*, \alpha\beta> \mid b\to a\alpha \in P \} \,.
\]
Clearly this function satisfies conditions $(1)$ and $(2)$ above, and thus, together with the initial
configuration $<*,s>$ defines a \pda. Furthermore, $\bb{\eta(*)}(w)(s) = 1$ if and only if
there exists a derivation for $w \in A^*$ in the grammar $G$.

As an example, let us consider the grammar $(\{a,b\},\{s,x\}, s, P)$ with productions
$P = \{ s \to asx, s \to ax, x \to b \}$  generating the language $\{ a^nb^n \mid n \geq 1 \}$.
The associated coalgebra $<o,t>\colon 1 \to FT1$ is given by
\[
\xymatrix{
\mbox{*} \ar@(l,u)^{a,s|sx} \ar@(u,r)^{a,s|x} \ar@(r,d)^-{b,x|\epsilon}
& &
\mbox{with $o(*)(\beta) = 1$ iff $\beta = \epsilon$}
}
\]
Even if the language accepted by the above \pda\, is the same as the one accepted by
the \pda\, in the previous example (i.e., $\bb{\eta(*)}(w)(s) = \bb{\eta(q_0)}(w)(s)$ for all $w\in A^*$), the two associated 
Moore automaton are not in $\approx_F^T$ (that is $\bb{\eta(*)} \neq \bb{\eta(q_0)}$).
In fact, the Moore automaton associated to the above coalgebra (see below) accepts the string 
$abab$ when starting from the configuration $<*,ss>$, while the one in the previous example does not (in symbols, 
$\bb{\eta(*)}(abab)(ss)=1$ while $\bb{\eta(q_0)}(abab)(ss)=0$).

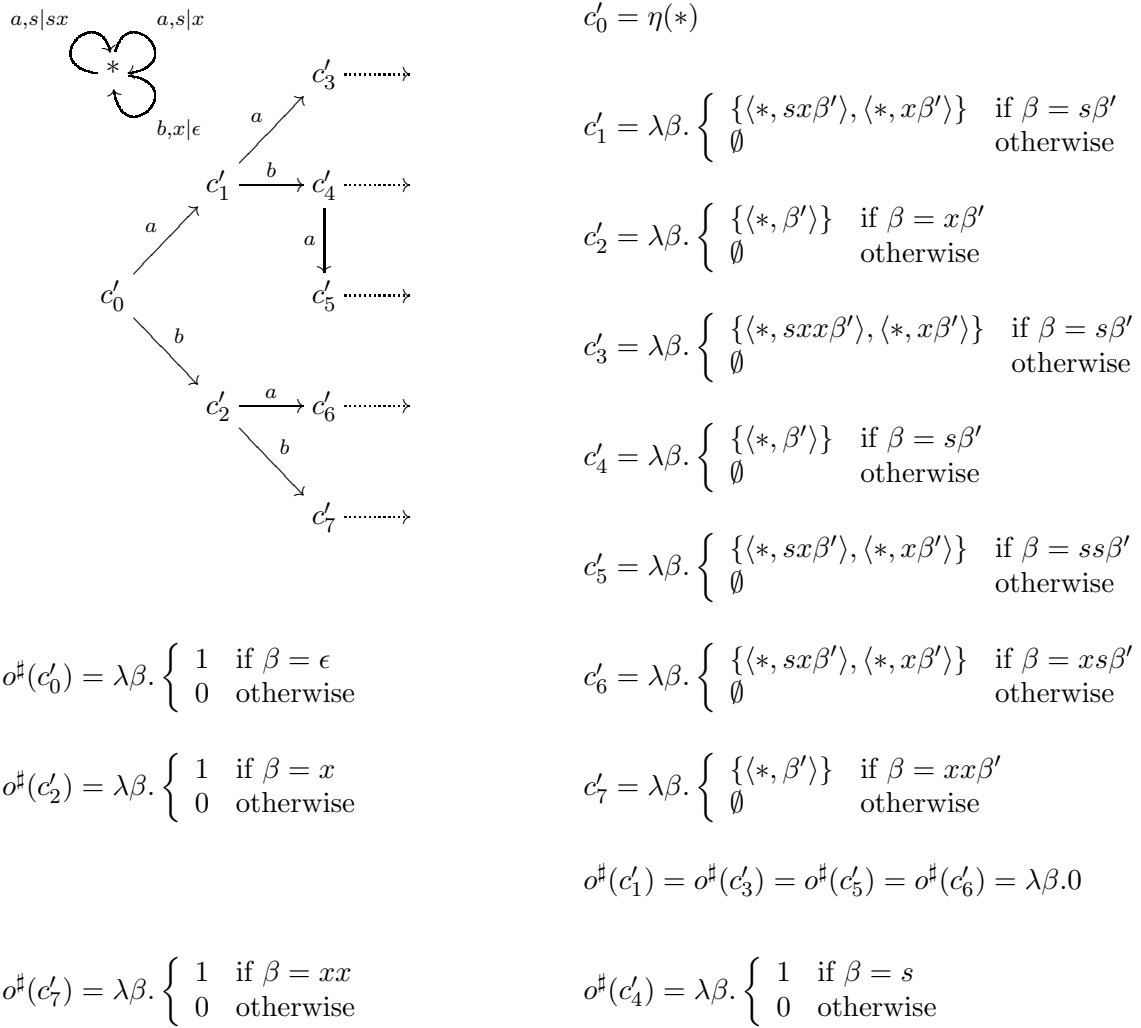
\begin{figure}
\begin{tabular}{lcl}
\multirow{8}{*}{$\xymatrix{
                 \mbox{*} \ar@(l,u)^{a,s|sx} \ar@(u,r)^{a,s|x} \ar@(r,d)^-{b,x|\epsilon} &     & c_3' \ar@{.>}[r] & \\
                 & c_1' \ar[ru]^-{a} \ar[r]^-{b} & c_4' \ar[d]_-{a}\ar@{.>}[r] & \\
c_0' \ar[ru]^-{a}\ar[rd]^-{b} &  & c_5' \ar@{.>}[r]& \\
    & c_2'  \ar[r]^-{a} \ar[rd]^-{b}  & c_6' \ar@{.>}[r]  & & \\
    &                                 & c_7' \ar@{.>}[r] & & }$}
& \;\;\; &
$c_0' = \eta(*)$
\\[2em]
& & $c_1' = \lambda \beta.\left\{
\begin{array}{ll}
\{<*,sx\beta'>,<*,x\beta'>\} & \mbox{if $\beta = s\beta'$}\\
\emptyset & \mbox{otherwise}
\end{array}
\right.$
\\[2em]
& & $c_2' = \lambda \beta.\left\{
\begin{array}{ll}
\{<*,\beta'> \} & \mbox{if $\beta = x\beta'$}\\
\emptyset & \mbox{otherwise}
\end{array}
\right.$
\\[2em]
& & $c_3' = \lambda \beta.\left\{
\begin{array}{ll}
\{<*,sxx\beta'>,<*,x\beta'>\} & \mbox{if $\beta = s\beta'$}\\
\emptyset & \mbox{otherwise}
\end{array}
\right.$
\\[2em]
& & $c_4' = \lambda \beta.\left\{
\begin{array}{ll}
\{<*,\beta'> \} & \mbox{if $\beta = s\beta'$}\\
\emptyset & \mbox{otherwise}
\end{array}
\right.$
\\[2em]
& & $c_5' = \lambda \beta.\left\{
\begin{array}{ll}
\{<*,sx\beta'>,<*,x\beta'>\} & \mbox{if $\beta = ss\beta'$}\\
\emptyset & \mbox{otherwise}
\end{array}
\right.$
\\[2em]
$o^\sharp(c_0') = \lambda \beta.\left\{
\begin{array}{ll}
1 & \mbox{if $\beta = \epsilon$}\\
0 & \mbox{otherwise}
\end{array}
\right.$& & $c_6' = \lambda \beta.\left\{
\begin{array}{ll}
\{<*,sx\beta'>,<*,x\beta'>\} & \mbox{if $\beta = xs\beta'$}\\
\emptyset & \mbox{otherwise}
\end{array}
\right.$
\\[2em]
$o^\sharp(c_2') = \lambda \beta.\left\{
\begin{array}{ll}
1 & \mbox{if $\beta = x$}\\
0 & \mbox{otherwise}
\end{array}
\right.$& & $c_7' = \lambda \beta.\left\{
\begin{array}{ll}
\{<*,\beta'>\} & \mbox{if $\beta = xx\beta'$}\\
\emptyset & \mbox{otherwise}
\end{array}
\right.$
\\[2em]
& &
$o^\sharp(c_1') = o^\sharp(c_3') = o^\sharp(c_5') =  o^\sharp(c_6') = \lambda \beta.0$
\\[2em]
$o^\sharp(c_7') = \lambda \beta.\left\{
\begin{array}{ll}
1 & \mbox{if $\beta = xx$}\\
0 & \mbox{otherwise}
\end{array}
\right.$& &
$o^\sharp(c_4') = \lambda \beta.\left\{
\begin{array}{ll}
1 & \mbox{if $\beta = s$}\\
0 & \mbox{otherwise}
\end{array}
\right.$
\end{tabular}\caption{The structured states $c_i\in TQ$, 
their transitions and their output of (part of) the Moore automaton associated to the \pda\ $(Q,A,B,\delta,k_0,K)$ where $Q=\{*\}$, $A=\{a,b\}$, 
$B=\{x,s\}$, $\delta$ is depicted on the left top, $k_0 = <*,s>$ and $K=\{<*,\epsilon>\}$.}
\end{figure}

\medskip

The above characterization of context free languages over an alphabet $A$ is different and complementary
to the coalgebraic account of context-free languages presented in~\cite{WBR11}. The latter, in fact,
uses the functor $D(X) = 2 \times X^A$ for deterministic automata (instead of the Moore automata
with output in $2^{B^*}$ above, for $B$ a set of variables), and the idempotent semiring
monad $T(X) = \pow((X+A)^*)$ (instead of our side effect monad) to study different
but equivalent ways to present context-free languages: using grammars, behavioural differential
equations and generalized regular expressions in which the Kleene star is replaced by a
unique fixed point operator.

\section{Coalgebras and $\T$-Algebras}\label{sec:bisim_implies_trace}
In the previous section we presented a framework, parameterized by a
functor $F$ and a monad $\T$, in which systems of type $FT$ (that
is, $FT$-coalgebras) can be studied using a novel equivalence
$\approx^T_F$ instead of the classical $\sim_{FT}$. The only
requirement we imposed was that $FT(X)$ has to be a $\T$-algebra.

In this section, we will present functors $F$ for which the requirement of $FT(X)$ being a
$\T$-algebra is guaranteed because they can be {\em lifted} to a functor $F^*$ on $\T$-algebra.
For these functors, the equivalence $\approx^T_F$ coincides with $\sim_{F^*}$. In other words,
working on  $FT$-coalgebras in $\Set$ under the novel $\approx^T_F$ equivalence is the same
as working on $F^*$-coalgebras on $\T$-algebras under the ordinary
$\sim_{F^*}$ equivalence. Next, we will prove that for this class of functors and an arbitrary monad $\T$ the equivalence
$\sim_{FT}$ is contained in $\approx^T_F$. Instantiating this result for our first motivating
example of non-deterministic automata will yield the well known fact that bisimilarity implies
trace equivalence.

Let $\T$ be a monad. An endofunctor $F^*\colon\Set^\T \to \Set^\T$ is
said to be the $\T$-\emph{algebra lifting} of a functor $F\colon\Set
\to \Set$ if the following square commutes\footnote{This is
equivalent to the existence of a distributive law $\lambda\colon TF
\Rightarrow FT$ ~\cite{Joh75}.}:
\[
\xymatrix@C=1.75cm@R=1.5cm{
\Set^\T \ar[d]_{U^\T}\ar[r]^{F^*} & \Set^\T \ar[d]^{U^\T}\\
\Set \ar[r]_{F} & \Set
}
\]
If the functor $F$ has a $\T$-algebra
lifting $F^*$ then $FT(X)$ is the carrier of the algebra $F^*(T(X),\mu)$.  Functors that
have a $\T$-algebra lifting are given, for example, by those endo\-functors on $\Set$
constructed inductively by the following grammar
\[
F ::=  \mathit{Id}  \mid B \mid F \times F \mid F^A \mid TG
\]
where $A$ is an arbitrary set, $B$ is the constant functor mapping
every set $X$ to the carrier of a $\T$-algebra $(B,h)$, and $G$ is
an arbitrary functor. Since the forgetful functor $U^\T\colon
\Set^\T \rightarrow \Set$ creates and preserves limits, both $F_1
\times F_2$ and $F^A$ have  a $\T$-algebra lifting if $F$, $F_1$,
and $F_2$ have. Finally, $TG$ has a $\T$-algebra lifting for every
endofunctor $G$ given by the assignment $(X,h) \mapsto
(TGX,\mu_{GX})$. Note that we do not allow taking coproducts in the
above grammar, because coproducts of $\T$-algebras are not preserved
in general by the forgetful functor $U^\T$. Instead, one could
resort to extending the grammar with the carrier of the coproduct
taken directly in $\Set^\T$. For instance, if $\T$ is the (finite)
powerset monad, then we could extend the above grammar with the
functor $F_1 \oplus F_2 = F_1 + F_2 + \{\top,\bot\}$.

All the functors of the examples in Sections \ref{sec:motiv} and \ref{sec:general}, as well as those in Section \ref{secbbat}, 
can be generated by the above grammar and, therefore, they have a $\T$-algebra lifting.

\medskip

Now, let $F$ be a functor with a $\T$-algebra lifting and for which a final coalgebra $\Omega$ exists. If
$\Omega$ can be constructed as the limit of the final sequence (for example assuming the functor accessible~\cite{Ada74}),
then, because the forgetful functor $U^\T\colon\Set^\T \to \Set$ preserves and creates limits, $\Omega$ is the carrier
of a $\T$-algebra, and it is the final coalgebra of the lifted functor $F^*$. Further,
for any $FT$-coalgebra $f\colon X \to FT(X)$, the unique $F$-coalgebra homomorphism $\bb{-}$
as in diagram (\ref{F-final}) is a $T$-algebra homomorphism between $T(X)$ and $\Omega$.
Conversely, the carrier of the final $F^*$-coalgebra (in $\Set^\T$) is the final $F$-coalgebra
(in $\Set$). 

Intuitively, the above means that for an accessible functor $F$ with a $\T$-algebra
lifting $F^*$, $F^*$-equivalence in $\Set^\T$ coincides with $F$-equivalence
with respect to $\T$ in $\Set$. The latter equivalence is coarser than the
$FT$-equivalence in $\Set$, as stated in the following theorem.
\begin{thm}\label{thm:main}
Let $\T$ be a monad. If $F$ is an endofunctor on $\Set$ for which a final coalgebra exists and with a $\T$-algebra lifting,
then $\sim_{FT}$ implies $\approx_F^T$.
\end{thm}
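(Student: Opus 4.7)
The plan is to use a universal-property argument against the final $F$-coalgebra $(\Omega,\omega)$. Let $\phi: X \to \Omega_{FT}$ be the unique homomorphism into the final $FT$-coalgebra, so $x_1 \sim_{FT} x_2$ amounts to $\phi(x_1) = \phi(x_2)$. I aim to show that $\bb{\eta_X(-)}: X \to \Omega$ factors through $\phi$; once this is done, $x_1 \sim_{FT} x_2$ immediately forces $\bb{\eta_X(x_1)} = \bb{\eta_X(x_2)}$, which is precisely $x_1 \approx_F^T x_2$.

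The main technical step is a lifting lemma: for any $FT$-homomorphism $h: (X,f) \to (Y,g)$, the induced map $T(h): T(X) \to T(Y)$ is an $F$-homomorphism from $(T(X), f^\sharp)$ to $(T(Y), g^\sharp)$. I would prove this by observing that both $g^\sharp \circ T(h)$ and $FT(h) \circ f^\sharp$ are $\T$-algebra morphisms from the free algebra $(T(X), \mu_X)$ into the algebra $F^*(T(Y),\mu_Y)$, whose carrier is $FT(Y)$ (here it matters that $F^*$ lifts $F$, so the underlying map of $F^*T(h)$ is $FT(h)$, and $T(h)$ is itself a $\T$-algebra morphism by naturality of $\mu$). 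By the universal property of the free $\T$-algebra, two such morphisms coincide as soon as they agree after precomposition with $\eta_X$; and they do, since $g^\sharp \circ T(h) \circ \eta_X = g^\sharp \circ \eta_Y \circ h = g \circ h$ while $FT(h) \circ f^\sharp \circ \eta_X = FT(h) \circ f$, the two being equal by the assumption that $h$ is an $FT$-morphism.

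Applying the lemma to $\phi$, the map $T(\phi): (T(X),f^\sharp) \to (T(\Omega_{FT}), \omega_{FT}^\sharp)$ is an $F$-homomorphism. Composing with $\bb{-}_{\Omega_{FT}}: T(\Omega_{FT}) \to \Omega$ yields another $F$-homomorphism from $(T(X),f^\sharp)$ to $(\Omega, \omega)$, which by finality of $\Omega$ must equal $\bb{-}_X$. Precomposing both sides with $\eta_X$ and using the naturality of $\eta$ gives $\bb{\eta_X(-)} = \bb{-}_{\Omega_{FT}} \circ \eta_{\Omega_{FT}} \circ \phi$, furnishing the desired factorization. The main obstacle is isolating the lifting lemma correctly: the crucial observation is that the uniqueness half of the free-algebra universal property provides exactly the equation we need, with the role of the lifting $F^*$ being to ensure that $FT(-)$ carries a functorial $\T$-algebra structure along which the extension $(-)^\sharp$ interacts well with $FT$-morphisms. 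Should the final $FT$-coalgebra not exist, the same argument runs verbatim with $\phi$ replaced by any $FT$-morphism (or cospan) witnessing $x_1 \sim_{FT} x_2$.
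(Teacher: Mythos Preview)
Your proposal is correct and follows essentially the same route as the paper's proof: the paper also establishes the ``lifting lemma'' (phrased there as a functor from $FT$-coalgebras to $F$-coalgebras) via the same free-algebra uniqueness argument on $\eta_X$, then applies it to the unique map into the final $FT$-coalgebra and invokes finality of the $F$-coalgebra $\Omega$ to obtain the factorization $\bb{-}_{TX} = \bb{-}_{T\Omega} \circ T(\bb{-}_X)$, which combined with naturality of $\eta$ yields the result. Your remark about replacing the final $FT$-coalgebra by an arbitrary witnessing cospan is a small generalisation the paper does not spell out.
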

\begin{proof}
We first show that there exists a functor from the category of
$FT$-coalgebras to the category of $F$-coalgebras.

This functor maps each $FT$-coalgebra $(X,f)$ into the $F$-coalgebra
$(T(X),f^{\sharp})$ and each $FT$-homomorphism $h \colon (X,f) \to (Y,g)$
into the $F$-homomorphism $T(h) \colon (T(X),f^{\sharp}) \to
(T(Y),g^{\sharp})$.
In order to prove that this is a functor we just have to show that
$T(h)$ is an $F$-homomorphism (i.e., the backward face of the
following diagram commutes).

$$\xymatrix{ & T(X) \ar@{}[dddl]^{f^{\sharp}} \ar[dddl]|(0.36){\hole} \ar[rr]^{T(h)}& & T(Y) \ar[dddl]^{g^{\sharp}}\\
X \ar[rr]^(0.6)h \ar[dd]_f \ar[ur]^{\eta_X} & & Y \ar[dd]^g \ar[ur]^{\eta_Y} \\
\\
FT(X) \ar[rr]_{FT(h)}& & FT(Y)} $$


Note that the front face of the above diagram commutes because $h$ is
an $FT$-homomorphism. Also the top face commutes because $\eta$ is
a natural transformation. Thus $$FT(h)\circ f^{\sharp}\circ \eta_X = FT(h)\circ f = g \circ h$$
 and also  $$g^{\sharp} \circ T(h)\circ \eta_X = g^{\sharp} \circ \eta_Y \circ h = g \circ h\text{.}$$
Since $\eta$ is the
unit of the adjunction, then there exists a unique
$j^{\sharp}\colon T(X)\to FT(Y)$ in $\Set^{\T}$ such that $g \circ
h=j^{\sharp}\circ \eta_X$. Since both $FT(h)\circ f^{\sharp}$ and
$g^{\sharp} \circ T(h)$ are (by construction) morphisms in
$\Set^{\T}$, then $FT(h)\circ f^{\sharp} = g^{\sharp} \circ T(h)$.

Let $(X, f)$ and $(Y,g)$ be two $FT$-coalgebras and $\bb{-}_X$ and $\bb{-}_Y$ their morphisms into the final $FT$-coalgebra $(\Omega, \omega)$.
Let $(T(X),f^{\sharp})$, $(T(Y),g^{\sharp})$ and $(T(\Omega), \omega^{\sharp})$ be the corresponding $F$-coalgebras and $\bb{-}_{TX}$, $\bb{-}_{TY}$ and $\bb{-}_{T\Omega}$
their morphisms into the final $F$-coalgebra $(\Omega', \omega')$.

Since $T(\bb{-}_X) \colon (T(X),f^{\sharp}) \to (T(\Omega), \omega^{\sharp})$ is an $F$-homomorphism, then by uniqueness,
$\bb{-}_{TX} =\bb{-}_{T\Omega} \circ T(\bb{-}_X)$.

\begin{center}
$$\xymatrix{ & T(X)\ar@(ur,ul)[rrrr]^{\bb{-}_{TX}} \ar@{}[dddl]^{f^{\sharp}} \ar[dddl]|(0.36){\hole} \ar[rr]^{T(\bb{-}_X)}& & T(\Omega) \ar[dddl]^{\omega^{\sharp}} \ar[rr]^{\bb{-}_{T\Omega}}& & \Omega' \ar[dddl]^{\omega'} \\
X \ar[rr]^(0.6){\bb{-}_X} \ar[dd]_f \ar[ur]^{\eta_X} & & \Omega \ar[dd]^{\omega} \ar[ur]^{\eta_{\Omega}} \\
\\
FT(X) \ar[rr]_{FT(\bb{-}_X)} \ar@(dr,dl)[rrrr]_{F(\bb{-}_{TX})} & & FT(\Omega) \ar[rr]_{F(\bb{-}_{T\Omega})} & & F(\Omega') }$$
\end{center}

With the same proof, we obtain $\bb{-}_{TY} =\bb{-}_{T\Omega} \circ T(\bb{-}_Y)$. 

Recall that for all $x\in X$ and $y\in Y$, by definition,
$x\sim_{FT} y$ iff $\bb{x}_X = \bb{y}_Y$ and $x\approx_F^T y$ iff $\bb{\eta_X(x)}_{TX} = \bb{\eta_Y(y)}_{TY}$.

Suppose that $\bb{x}_X = \bb{y}_Y$. Then, $ T(\bb{\eta_X (x)}_X) = \eta_{\Omega} \circ \bb{x}_X = \eta_{\Omega} \circ \bb{y}_Y = T(\bb{\eta_Y (y)}_Y)$ and, finally,
$\bb{\eta_X(x)}_{TX} = \bb{-}_{T\Omega} \circ T(\bb{\eta_X(x)}_X) = \bb{-}_{T\Omega} \circ T(\bb{\eta_Y(y)}_Y) = \bb{\eta_Y(y)}_{TY}$.
\end{proof}
%
%

The above theorem  instantiates to the well-known facts: for NDA, where $F(X) = 2 \times
X^A$ and $T=\pow$,  that bisimilarity implies language
equivalence; for partial automata,  where $F(X) = 2 \times
X^A$ and $T=1+-$, that equivalence of pairs of languages, consisting
of defined paths and accepted words, implies
equivalence of accepted words; for probabilistic automata, where $F(X)
= [0,1] \times X^A$ and $T=\mathcal D_\omega$, that probabilistic
bisimilarity implies probabilistic/weighted language equivalence. Note that, in general, the
above inclusion is strict.

\begin{remark}
Let $(X,f)$ be an $FT$-coalgebra for a monad $\T$ and a functor $F$. If
$\eta\colon \mathit{id} \Rightarrow T$ is pointwise injective, then
$\sim_{FT}$ on the
$FT$-coalgebra $(X,f)$ coincides with $\sim_{TFT}$ on the extended
$TFT$-coalgebra
$(X, \eta_{FT(X)} \circ f)$~\cite{Rutten00,bartels}. If moreover $F$ has a
$\T$-algebra
lifting then, by the above theorem (on the extended $TFT$-coalgebra), $\sim_{TFT}$  implies
$\approx_{TF}^T$. Combining
the two implications, it follows that hat $\sim_{FT}$ on the
$FT$-coalgebra $(X,f)$ implies
$\approx_{TF}^T$ on the extended  $TFT$-coalgebra $(X, \eta_{FT(X)}
\circ f)$. Finally,
under the assumption that $F$ has a $\T$-algebra lifting, we also have
that $\approx_F^T$
the $FT$-coalgebra $(X,f)$ implies $\approx_{TF}^T$ on the extended
$TFT$-coalgebra
$(X, \eta_{FT(X)} \circ f)$. This yields the following hierarchy of equivalences.
\[
\xymatrix{
& \approx_{TF}^T\\
 & & & \approx_F^T\ar@{-}[ull]_\supseteq\\
\sim_{TFT}
\ar@{-}[uur]^{\subseteq} \ar@{-}[rr]_{=} & &
\sim_{FT}\ar@{-}[ur]_\subseteq
}
\]
\end{remark}


\section{Beyond Bisimilarity and Traces}\label{secbbat}
The operational semantics of interactive systems is
usually specified by labeled transition systems (LTS's). The denotational semantics is given in terms of behavioural
equivalences, which depend the amount of branching structure
considered. Bisimilarity (full branching) is sometimes considered
too strict, while trace equivalence (no
branching) is often considered too coarse. The \emph{linear time / branching time spectrum}
~\cite{Glabbeek90} shows a taxonomy of many interesting equivalences lying in between bisimilarity and traces.

Labeled transition system are coalgebras for the functor $\pow(Id)^A$ and the coalgebraic equivalence $\sim_{\pow(Id)^A}$ coincides
with the standard notion of Park-Milner bisimilarity. In~\cite{PowerTuri}, it is shown a coalgebraic
characterization of traces semantics (for LTS's) employing Kleisli categories. 
More recently,~\cite{Monteiro08} have provided a characterization of trace, failure and
ready semantics by mean of ``behaviour objects''. Another coalgebraic approach~\cite{Klin04} relies on ``test-suite'' that, intuitively, 
are fragments of Hennessy-Milner logic.
In this section, we show that (finite) trace equivalence~\cite{Hoare78}, complete trace equivalence~\cite{Glabbeek90},
failures~\cite{BrookesHR84} and ready semantics~\cite{OlderogH86}
can be seen as special cases of $\approx_F^T$.

\bigskip

Before introducing these semantics, we fix some notations. A
labeled transition system is a pair $(X, \delta)$ where $X$ is a set
of states and $\delta\colon X \to \pow(X)^A$ is a function assigning to
each state $x\in X$ and to each label $a\in A$ a finite set of possible
successors states: $x\tr{a}y$ means that $y\in \delta(x)(a)$. Given a
word $w \in A^*$, we write $x\tr{w}y$ for $x\tr{a_1}\dots \tr{a_n}y$
and $w=a_1 \dots a_n$. When $w=\epsilon$, $x\tr{\epsilon}y$ iff
$y=x$.
For a function $\varphi\in \pow(X)^A$, $I(\varphi)$ denotes the set
of all labels ``enabled'' by $\varphi$, i.e., $\{a\in A \mid
\varphi(a)\neq \emptyset\}$, while $\textit{Fail}(\varphi)$ denotes the set
$\{Z \subseteq A \mid Z \cap I(\varphi)=\emptyset\}$.


Let $<X, \delta>$ be a LTS and $x\in X$ be a state. A \emph{trace}
of $x$ is a word $w \in A^*$ such that $x \tr{w}y$ for some $y$. A
trace $w$ of $x$ is \emph{complete} if $x\tr{w}y$ and $y$ stops,
i.e., $I(\delta(y))=\emptyset$. A \emph{failure pair} of $x$ is a
pair $<w, Z>\in A^*\times \pow(A)$ such that $x\tr{w}y$ and $Z\in
\textit{Fail}(\delta(y))$. A \emph{ready pair} of $x$ is a pair $<w, Z>\in A^*\times \pow(A)$ such
that $x\tr{w}y$ and $Z = I(\delta (y))$. We use $\mathcal{T}(x)$,
$\mathcal{CT}(x)$, $\mathcal{F}(x)$ and $\mathcal{R}(x)$ to denote,
respectively, the set of all traces, complete traces, failure pairs
and ready pairs of $x$. For $\mathcal{I}$ ranging over $\mathcal{T},
\mathcal{CT}, \mathcal{F}$ and $\mathcal{R}$, two states $x$ and $y$
are $\mathcal{I}$-equivalent iff $\mathcal{I}(x)=\mathcal{I}(y)$.

\bigskip

For an example, consider the following transition systems
labeled over $A=\{a,b,c\}$. They are all trace equivalent because
their traces are $a,ab,ac$. The trace $a$ is also complete for $p$,
but not for the others. Only $r$ and $s$ are failure equivalent,
since $<a,\{bc\}>$ is a failure pair only of $p$, while $<a,\{b\}>$
and $<a,\{c\}>$ are failure pairs of $p$, $r$ and $s$, but not of
$q$. Finally they are all ready different, since $<a,\emptyset>$ is
a ready pair only of $p$, $<a,\{b,c\}>$ is a ready pair of $q$ and
$s$ but not of $r$, and $<a,\{b\}>$ and $<a,\{c\}>$ are ready pairs
only of $r$ and $s$.
\[
\xymatrix{& p \ar[d]^{a} \ar[dl]_{a} & && q \ar[d]^{a} & && r \ar[dr]^{a}\ar[dl]_{a} & && s \ar[d]|a \ar[dr]^{a}\ar[dl]_{a} \\
 &\ar[dr]^{c}\ar[dl]_{b}& & & \ar[dr]^{c}\ar[dl]_{b}& & \ar[d]_{b}&& \ar[d]^{c} &   \ar[d]_b& \ar[rd]_b\ar[ld]^c& \ar[d]^c\\
 &&  & && &&  & &&& }
\]

We can now show that these equivalences are instances of
$\approx^T_F$. We first show ready equivalence in details and
then, briefly, the others.

Take $T=\pow$ and $F=\pow(\pow(A)) \times id^A$. For each set $X$,
consider the function $\pi^\mathcal{R}_X \colon \pow(X)^A \to FT(X)$
defined for all $\varphi \in \pow(X)^A$ by
\[\pi^\mathcal{R}_X(\varphi)=<\{I(\varphi)\}, \varphi>\text{.}\] This function allows to
transform each LTS $(X,\delta)$ into the $FT$-coalgebra
$(X,\pi^\mathcal{R}_X \circ \delta)$. The latter has the same
transitions of $<X,\delta>$, but each state $x$ is ``decorated''
with the set
$\{I(\varphi)\}$.

Now, by employing the powerset construction, we transform
$<X,\pi^\mathcal{R}_X\circ \delta>$ into the $F$-coalgebra
$(\pow(X), <o,t> )$, where, for all $Y \in \pow(X)$, $a\in A$, the
functions $o \colon \pow(X)\to \pow(\pow(A))$ and $t \colon \pow(X)\to \pow(X)^A$
are
$$o (Y) = \bigcup\limits_{y\in Y}\{I(\delta(y))\}  \qquad t(Y)(a) =
\bigcup\limits_{y\in Y}\delta(y)(a)\text{.}$$

The final $F$-coalgebra is $(\pow(\pow(A))^{A^*},<\epsilon, (-)_a>)$
where $<\epsilon, (-)_a>$ is defined as usual.
\[
\xymatrix{ X \ar[d]_{\delta}\ar[r]^{\{\cdot\}} & \pow(X)\ar@{}[ddrrr]|{\small\begin{array}{l}
\bb{Y}(\epsilon) = o(Y)\\
\bb{Y}(aw) = \bb{t(Y)(a)}(w)\\
\end{array}  }
\ar[ddl]^-{<o,t>}\ar@{-->}[rrr]^{\bb{-}} &&& \pow(\pow(A))^{A^*}
\ar[dd]^{<\epsilon, (-)_a>}\\
(\pow(X))^A \ar[d]_{\pi^\mathcal{R}_X} & & &
\\
\pow(\pow(A)) \times (\pow(X))^A\ar@{-->}[rrrr] &&&&
\pow(\pow(A))\times (\pow(\pow(\pow(A))^{A^*}))^A
} %
\]
Summarizing, the final map
$\bb{-} \colon \pow(X) \to \pow(\pow(A))^{A^*}$ maps each $\{x\}$ into a
function assigning to each word $w$, the set $\{Z\subseteq A \mid x\tr{w}y \text{ and } Z=I(\delta(y))\}$. In other terms,
$Z\in \bb{\{x\}}(w)$ iff $<w,Z>\in \mathcal{R}(x)$.

For the state $s$ depicted above,
$\bb{\{s\}}(\epsilon)=\{\{a\}\}$, $\bb{\{s\}}(a)=\{\{b\}, \{b,c\},
\{c\}\}$, $\bb{\{s\}}(ab)=\bb{\{s\}}(ac)=\{\emptyset\}$ and for all
the other words $w$, $\bb{\{s\}}(w)=\emptyset$.

\bigskip

The other semantics can be characterized in the same way, by
choosing different functors $F$ and different functions
$\pi_X \colon \pow(X)^A \to FT$.

For failure semantics, take the same functor as for the ready semantics, that is $F=\pow(\pow(A)) \times id^A$ 
and a new function $\pi^\mathcal{F}_X \colon \pow(X)^A \to FT(X)$
defined $\forall \varphi \in \pow(X)^A$ by $$\pi^\mathcal{F}_X (\varphi)= <\textit{Fail}(\varphi),
\varphi>\text{.}$$
The $FT$-coalgebra $(X,\pi^\mathcal{F}_X \circ \delta)$ has the same
transitions of the LTS $<X,\delta>$, but each state $x$ is ``decorated''
with the set $\textit{Fail}(\varphi)$.

For both trace and complete trace equivalence, take $F=2 \times
id^A$ (as for NDA). For trace equivalence,
$\pi^\mathcal{T}_X \colon \pow(X)^A \to FT(X)$ maps $\varphi \in \pow(X)^A$
into $<1, \varphi>$. Intuitively, $(X,\pi^\mathcal{T}_X \circ \delta)$ is
an NDA where all the states are accepting.
For complete traces, $\pi^{\mathcal{CT}}_X\colon \pow(X)^A \to FT(X)$ maps
$\varphi$ in $<1,\varphi>$ if $I(\varphi)=\emptyset$ (and in
$<0,\varphi>$ otherwise).

By taking $T=\mathcal D_\omega$ instead of $T=\mathcal P_\omega$, we hope to be able to 
 characterize probabilistic trace, complete trace, ready and
failure as defined in \cite{JouS90}.

\section{Discussion}\label{sec:discussion}
In this paper, we lifted the powerset construction on automata to the
more general framework of $FT$-coalgebras.
Our results lead to a uniform treatment of several kinds of
existing and new variations of automata (that is,
$FT$-coalgebras) by an algebraic structuring of their state space through a monad $T$.
 We showed as examples partial Mealy
machines, structured Moore automata, nondeterministic, partial and
 probabilistic automata. Furthermore, we have presented an interesting coalgebraic characterization of pushdown automata and showed how several behavioural equivalences stemming from 
 concurrency theory can be retrieved from the general framework. It is worth
mentioning that the framework instantiates to many other examples,
among which are \emph{weighted automata}~\cite{Schutzenberger61b}. These
are simply structured Moore automata for $B=1$ and
$\T=\mathbb{S}_{\omega}^-$ (for a semiring
$\mathbb{S}$)~\cite{gumm}.
It is easy to see that $\sim_{FT}$ coincides with weighted
bisimilarity~\cite{german}, while $\approx^T_F$ coincides with
weighted language equivalence~\cite{Schutzenberger61b}.

\medskip

Some of the aforementioned examples can also be coalgebraically
characterized in the framework of~\cite{HJS,HasuoThesis}. There, instead of
considering $FT$-coalgebras on $\Set$ and $F^*$-coalgebras on
$\Set^{\T}$ (the Eilenberg-Moore category), $TG$-coalgebras on
$\Set$ and $\overline{G}$-coalgebras on $\Set_{\T}$ (the
\emph{Kleisli} category) are studied. The main theorem of~\cite{HJS}
states that under certain assumptions, the initial $G$-algebra is
the final $\overline{G}$-coalgebra that characterizes (generalized)
trace equivalence. 
The exact relationship between these two approaches has been studied in~\cite{JSS} 
(and, indirectly, it could be deduced from~\cite{BK} and~\cite{KK}). 
It is  worth to remark that many of our examples do not fit the framework in~\cite{HJS}: for instance, the exception, the
side effect, the full-probability and the interactive output monads do not fulfill their
requirements (the first three do not have a bottom element and the
latter is not commutative). Moreover, we also note
that the example of partial Mealy machines is not purely trace-like,
as all the examples in~\cite{HJS}.


The idea of using monads for modeling automata with non-determinism, 
probabilism or side-effects dates back to the ``$\lambda$-machines'' of~\cite{AM75} that, rather than coalgebras, rely on algebras. 
More precisely, the dynamic of a $\lambda$-machine is a morphism 
$\delta \colon FX \to TX$, where $F$ is a functor and $T$ is a monad (for instance the transitions of $T$-structured Moore automata are 
a function $\delta \colon X \times A \to TX$ mapping a state and an input symbol into an element of $TX$).
Analogously to our approach, each $\lambda$-machine induces an ``implicit $\lambda$-machine'' having $TX$ as state space. Many examples of this paper (like Moore automata) 
can be seen as $\lambda$-machines, but those systems that are essentially coalgebraic (like Mealy machines) 
do not fit the framework in~\cite{AM75}. 

\medskip

There are several directions for future research. On the one hand, we
will try to exploit \emph{$F$-bisimulations up to $T$}~\cite{Lenisa99,LenisaPW00}
as a sound and complete proof technique for $\approx_F^T$.
%
On the other hand, we would like to lift many of those coalgebraic
tools that have been developed for ``branching equivalences'' (such
as coalgebraic modal logic \cite{ml,Lutz_expressivity} and
(axiomatization for) regular expressions~\cite{BRS09b}) to work with
the ``linear equivalences'' induced by $\approx_F^T$.

We have pursued further the applications to decorated traces and the challenging modeling of the full linear-time spectrum in a separate paper~\cite{mfps}, work which we also plan to extend to probabilistic traces. 


\begin{thebibliography}{10}

\bibitem{Ada74}
J.~Ad\'{a}mek.
\newblock Free algebras and automata realization in the language of categories.
\newblock {\em Comment. Math. Univ. Carolinae}, 15:589--602, 1974.

\bibitem{AM75}
M. Arbib, and E. Manes.
\newblock Fuzzy machines in a category.
\newblock {\em Bull. Austral. Math. Soc.}, 13:169--210, 1975.

\bibitem{ABB97}
J.-M. Autebert, J. Berstel, and L. Boasson.
\newblock Context-Free Languages and Push-Down Automata.
\newblock In G. Rozenberg and  A. Salomaa (eds.), {\em Handbook of Formal Languages},
 Volume 1, pages 111-174. Springer-Verlag, 1997.

\bibitem{bartels}
F. Bartels.
\newblock {\em On generalized coinduction and probabilistic specification formats}.
\newblock PhD thesis, Vrije Universiteit Amsterdam, 2004.

\bibitem{BHM00}
N. Benton, J. Hughes, and E. Moggi.
\newblock Monads and Effects.
\newblock Course notes for {\em APPSEM Summer School}, 2000. Available on line at
  {\tt http://www.disi.unige.it/person/MoggiE/APPSEM00/BHM.ps}.

\bibitem{BK}
A. Balan, and A. Kurz.
\newblock On Coalgebras over Algebras. 
\newblock {\em Electronic Notes in Theoretical Computer Science}. 264(2): 47-62 (2010)

\bibitem{mfps}
F. Bonchi, M.M. Bonsangue, G. Caltais, J.J.M.M. Rutten, and A. Silva.
\newblock Final semantics for decorated traces,
\newblock In {\em Proceedings of MFPS}, ENTCS, Elsevier, 2012, to appear.

\bibitem{BRS09b}
M.M. Bonsangue, J.J.M.M. Rutten, and A. Silva.
\newblock An algebra for {K}ripke polynomial coalgebras.
\newblock In {\em Proceedings of 24th Annual IEEE Symposium on Logic In Computer
   Science (LICS 2009)}, pages 49--58. IEEE Computer Society, 2009.

\bibitem{BrookesHR84}
S.D. Brookes, C.A.R. Hoare and A.W. Roscoe.
\newblock A Theory of Communicating Sequential Processes.
\newblock {\em Journal of the ACM}, 31(3):560--599, ACM 1984.

\bibitem{german}
P. Buchholz.
\newblock Bisimulation relations for weighted automata.
\newblock {\em Theoretical Computer Science}, 393(1-3):109--123, Elsevier, 2008.

\bibitem{Cho62}
N. Chomsky.
\newblock Context Free Grammars and Pushdown Storage.
\newblock {\em Quarterly Progress Report}, volume 65, MIT Research Laboratory
   in Electronics, Cambridge, MA, 1962.

\bibitem{ml}
C. C\^{\i}rstea, A. Kurz, D. Pattinson, L. Schr{\"o}der, and Y. Venema.
\newblock Modal logics are coalgebraic.
\newblock {\em Computer Journal} 54(1):31--41, Oxford University Press, 2011.

\bibitem{Eve63}
R.J. Evey.
\newblock Application of Pushdown Store Machines.
\newblock In {\em Proceedings of the 1963 Fall Joint Computer Conference (AFIPS 1963)}, ACM, 1963.

\bibitem{Glabbeek90}
R.J. van Glabbeek.
\newblock The Linear Time-Branching Time Spectrum.
\newblock In E. Best (Ed.), {\em Proceedings of CONCUR 93}, volume 458 of
    {\em Lecture Notes in Computer Science}, pages 278--297. Springer, 1990.

\bibitem{Gre67}
S. Greibach.
\newblock A Note on Pushdown Store Automata and Regular Systems.
\newblock {\em Proceedings of the American Mathematical Society}, 18:263--268, American
  Mathematical Society 1967.

\bibitem{gumm}
H.P. Gumm and T. Schr{\"o}der.
\newblock Monoid-labeled transition systems.
\newblock {\em Electronic Notes in Theoretical Computer Science}, 44(1):184--203, Elsevier 2001.

\bibitem{helle}
H.H. Hansen.
\newblock Coalgebraising subsequential transducers.
\newblock {\em Electronic Notes in Theoretical Computer Science}, 203(5):109--129, 2008.

\bibitem{HasuoThesis}
I. Hasuo.
\newblock {\em Tracing Anonymity with Coalgebras}.
\newblock PhD thesis, Radboud University Nijmegen, 2008.

\bibitem{HJS}
I. Hasuo, B. Jacobs, and A. Sokolova.
\newblock Generic trace semantics via coinduction.
\newblock {\em Logical Methods in Computer Science}, 3(4):1--36, 2007.

\bibitem{Hoare78}
C. A. R. Hoare.
\newblock Communicating Sequential Processes.
\newblock {\em Communincation of the ACM.}, 21(8):666--677, ACM, 1978.

\bibitem{HU79}
J. Hopcroft, J. Ullman.
\newblock \emph{Introduction to Automata Theory, Languages, and Computation}.
\newblock Addison-Wesley, 1979.

\bibitem{jacobs05}
B. Jacobs.
\newblock Distributive laws for the coinductive solution of recursive equations.
\newblock {\em Information and Computation}, 204(4): 561-587, 2006.

\bibitem{JSS}
B. Jacobs, A. Silva, and A. Sokolova.
\newblock Trace Semantics via Determinization.
\newblock To appear in {\em Proceedings of CMCS 12}, in {\em Lecture Notes in Computer Science}. Springer, 2012. 

\bibitem{Joh75}
P.T. Johnstone.
\newblock Adjoint lifting theorems for categories of algebras.
\newblock {\em Bulletin London Mathematical Society}, 7:294--297, 1975.

\bibitem{JouS90}
C. Jou and S.A. Smolka.
\newblock Equivalences, Congruences, and Complete Axiomatizations for Probabilistic Processes.
\newblock In J. Baeten and J.W. Klop (eds), {\em  proceedings of CONCUR '90}, volume 458 of
  {\em Lecture Notes in Computer Science}, pages 367--383, Springer, 1990.

\bibitem{Klin04}
B. Klin.
\newblock A coalgebraic approach to process equivalence and a coinduction principle for traces.
\newblock \newblock {\em  Electronic Notes in Theoretical Computer Science}, 106:201--218, 2004.

\bibitem{KK}
C. Kissig, and A. Kurz. 
\newblock Generic Trace Logics.
\newblock In {\em arXiv:1103.3239v1 [cs.LO]}, 2011.

\bibitem{LarsenS91}
K.G. Larsen and A. Skou.
\newblock Bisimulation through probabilistic testing.
\newblock {\em Information and Computation}, 94(1):1--28, 1991.

\bibitem{Lenisa99}
M. Lenisa.
\newblock From Set-theoretic Coinduction to Coalgebraic Coinduction: some results, some problems.
\newblock {\em Electronic Notes in Theoretical Computer Science}, 19:2--22, Elsevier, 1999.

\bibitem{LenisaPW00}
M. Lenisa, J. Power and H. Watanabe.
\newblock Distributivity for endofunctors, pointed and co-pointed endofunctors, monads and comonads.
\newblock {\em Electronic Notes in Theoretical Computer Science}, 33:230--260, Elsevier, 2000.


\bibitem{Man76}
E. Manes.
\newblock {\em Algebraic theories}.
\newblock {\em Graduate Texts in Mathematics}, 26, Springer 1976.

\bibitem{Moggi}
E. Moggi.
\newblock Notions of computation and monads.
\newblock {\em Information and Computation}, 93(1):55--92, 1991.

\bibitem{Monteiro08}
L. Monteiro.
\newblock A Coalgebraic Characterization of Behaviours in the Linear Time - Branching Time Spectrum.
\newblock In {\em proceedings of the 19th International Workshop on Recent Trends in Algebraic
  Development Techniques (WADT 2008)}, volume 5486 of {\em Lecture Notes in Computer Science},
  pages 128--140. Springer, 2009.

\bibitem{OlderogH86}
E.-R. Olderog and C.A.R. Hoare.
\newblock Specification-Oriented Semantics for Communicating Processes.
\newblock {\em Acta Informaticae}, 21(1):9--66, 1986.

\bibitem{PowerTuri}
J. Power and D. Turi.
\newblock A Coalgebraic Foundation for Linear Time Semantics.
\newblock {\em  Electronic Notes in Theoretical Computer Science}, 160:305--29, 1999.

\bibitem{rabin}
M.O. Rabin.
\newblock Probabilistic automata.
\newblock {\em Information and Control}, 6(3):230--245, 1963.

\bibitem{Rutten00}
J.J.M.M. Rutten.
\newblock Universal coalgebra: a theory of systems.
\newblock {\em Theoretical Computer Science}, 249(1):3--80, Elsevier, 2000.

\bibitem{jan_mealy}
J.J.M.M. Rutten.
\newblock Algebraic specification and coalgebraic synthesis of mealy automata.
\newblock {\em Electronic Notes in Theoretical Computer Science}, 160:305--319, 2006.

\bibitem{jan99}
J.J.M.M. Rutten.
\newblock Coalgebra, concurrency, and control.
\newblock In R.~Boel and G.~Stremersch (eds.), {\em proceedings of the 5th Workshop
  on Discrete Event Systems (WODES 2000)}, pages 31--38,  Kluwer, 2000.

\bibitem{Lutz_expressivity}
L. Schr{\"o}der.
\newblock Expressivity of coalgebraic modal logic: The limits and beyond.
\newblock {\em Theoretical Computer Science}, 390(2-3):230--247, Elsevier, 2008.

\bibitem{Schutzenberger61b}
M.P. Sch{\"u}tzenberger.
\newblock On the definition of a family of automata.
\newblock {\em Information and Control}, 4(2-3):245--270, 1961.

\bibitem{Sch63}
M.P. Sch\"{u}tzenberger.
\newblock On Context Free Languages and Pushdown Automata.
\newblock {\em Information and Control}, 6:246-264, 1963.

\bibitem{FSTTCS}
A. Silva, F. Bonchi, M. Bonsangue and J. Rutten.
\newblock Generalizing the powerset construction, coalgebraically.
\newblock In proceedings of {\em IARCS Annual Conference on Foundations
  of Software Technology and Theoretical Computer Science, (FSTTCS 2010)},
  volume 8 of {\em LIPIcs}, pages 272 -- 283, Schloss Dagstuhl - Leibniz-Zentrum
  fuer Informatik, 2010.


\bibitem{WBR11}
J. Winter, M.M. Bonsangue, J.J.M.M. Rutten.
\newblock Context-Free Languages, Coalgebraically.
\newblock In A. Corradini, B. Klin, and C. Cirstea, (eds.), {\em Proceedings of
   4th Int. Conference on Algebra and Coalgebra in Computer science (CALCO 2011)},
   volume 6859 of {\em Lecture Notes in Computer Science}, pages 359-376, Springer, 2011.

\end{thebibliography}

\end{document}